\documentclass{llncs}
\pagestyle{plain} 

\usepackage{amsfonts} 
\usepackage{mathtools}
\usepackage{complexity}
\usepackage{mathrsfs}  
\usepackage[numbers,sectionbib]{natbib}

\usepackage[backgroundcolor=white,linecolor=black]{todonotes}
\usepackage{xspace}

\usepackage{tikz}
\usetikzlibrary{shapes.geometric, calc, fit}
\usetikzlibrary{shadows}
\pgfdeclarelayer{bg}    
\pgfsetlayers{bg,main}

\tikzstyle{regular vertex}=[circle,
draw]
\tikzstyle{black vertex}=[regular vertex,
fill=black]
\tikzstyle{white vertex}=[regular vertex,
fill=black!1]
\tikzstyle{vertex}=[regular vertex,
minimum size=12pt,
inner sep=0pt,
circular drop shadow]
\tikzstyle{svertex}=[vertex,
fill=black!23]
\tikzstyle{bvertex}=[vertex,
fill=black!3]
\tikzstyle{edge}=[draw,thick]
\tikzstyle{selected edge}=[draw, line width=3pt]

\include{newtheorems-en}


\DeclareMathOperator{\discounts}{\mathscr{D}}
\DeclareMathOperator{\pc}{pc}
\DeclareMathOperator{\OPT}{opt}


\newcommand{\pblong}{{\sc Clever Shopper}\xspace}
\newcommand{\pbabbrev}{{\sc Clever Shopper}\xspace}

\usepackage{enumitem}

\newlist{aims}{itemize}{1}
\setlist[aims,1]{
	label={Question~\arabic*}
	,leftmargin=*
	,align=left
	,   labelsep=0mm
}

\newcommand{\decisionproblem}[3]{{
\centering
		\noindent
        \parbox{35em}{
				\begin{minipage}[t]{1\linewidth}
					{\sc #1}
					\vspace{1pt}
					\begin{aims}[topsep=0pt]
						\item[Input:] {#2}
						\item[Question:] {#3}
					\end{aims}
				\end{minipage}
			}
}}

\hyphenation{pri-ces}
\usepackage{cleveref}  

\begin{document}
	
	\title{The Clever Shopper Problem} 
\author{Laurent Bulteau\inst{1}\and Danny Hermelin\inst{2} \and Anthony Labarre\inst{1} \and St\'ephane Vialette\inst{1}}
\institute{
	Universit\'e Paris-Est, LIGM (UMR 8049), CNRS, ENPC, ESIEE Paris, UPEM, F-77454, Marne-la-Vall\'ee, France\\
	\texttt{\{laurent.bulteau,anthony.labarre,stephane.vialette\}@u-pem.fr}
\and
Department of Industrial Engineering and Management, Ben-Gurion University of the Negev, Israel\\
	\texttt{hermelin@bgu.ac.il}}
\authorrunning{L. Bulteau et. al.} 
	
	\maketitle              

	\begin{abstract}
	We investigate a variant of the so-called {\sc Internet Shopping}
    problem introduced 
    by Blazewicz et al. (2010), 
	where a customer wants to buy a list of products at the lowest 
    possible total cost from shops which offer discounts when 
    purchases exceed a certain threshold. 
	Although the problem is \NP-hard, we provide exact algorithms for
    several cases, e.g. when each shop sells only two items, 
    and an \FPT\ algorithm for the number of items, or for the number of 
    shops when all prices are equal. We complement each result with
    hardness proofs in order to draw a tight boundary between tractable 
    and intractable cases. Finally, we give an approximation 
    algorithm and hardness results for the problem 
    of maximising the  sum of discounts. 
	\end{abstract}

    \section{Introduction}
    
\citet{DBLP:journals/amcs/BlazewiczKMUW10} introduced and  described     the {\sc Internet Shopping} problem as follows: given a set of shops offering products at various prices and the delivery costs for each set of items bought from each shop, find where to buy each product from a shopping list at a minimum total cost. 
    The problem is known to be 
    \NP-hard in the strong sense even 
    when all products are free and all delivery costs are equal to one, and
	admits no polynomial $(c\ln n)$-approximation algorithm (for any $0<c<1$) unless \P\ = \NP. 
    
    A more realistic variant 
    takes into account 
    discounts offered by shops in some cases. These could be offered, for instance, when the shopper's purchases exceed a certain amount, or in the case of special promotions where buying several items together costs less than buying them separately.  \citet{DBLP:journals/4or/BlazewiczBKM14} investigated such a variant, 
    which features a concave increasing discount function on the products' prices. They showed that the 
    problem is  \NP-complete in the strong sense even if each product appears in at most three shops and each shop sells exactly three products, as well as in the case where each product is available at three different prices and each shop has all products but sells exactly three of them at the same value. 
	A variant where two separate discount functions are taken into account (one for the deliveries, the other for the prices) was also recently introduced and studied by \citet{Blazewicz2016}.

In this work, we investigate the case where 
a shopper aims to buy $n$ books from $m$ shops 
with free shipping; additionally, each shop offers a discount when purchases exceed a certain threshold (discounts and thresholds are specific to each shop).  We show that the associated decision problem, which we call the \pbabbrev problem, is already \NP-complete when 
only two shops are available, 
or when all books are available from two shops and each shop sells exactly three books. 
We also obtain parameterised hardness results: namely, that \pbabbrev is \W[1]-hard when the parameter is $m$ or the number of shops in a solution, and that it admits no polynomial-size kernel.  On the positive side, we give a polynomial-time algorithm for the case where every shop sells at most two books, an \XP\ algorithm for the case where few shops sell books at small prices, an \FPT\ algorithm with parameter $n$, and another \FPT\ algorithm with parameter $m$.   

    Let us now formally define 
    \pbabbrev. 
    For $n\in \mathbb{N}$, let $[n]=\{1$, $2$, $\ldots$, $n\}$. 
	Let $B$ be a set of books to buy, 
    $S$ be a set of shops, $E\subseteq B\times S$ encodes the availability of the books in the shops, and $w:E\rightarrow \mathbb N$ encodes the prices. 
    Choosing a shop in which to buy each book is encoded as a subset $E'\subseteq E$, such that each book is \emph{covered exactly once} (i.e., any $b\in B$ has degree 1 in $E'$).
	A \emph{discount} $d_s\in\mathbb{R^+}$ is associated to each shop $s$ and offered
	when a \emph{threshold} $t_s\in\mathbb{R^+}$ is reached, which is formally defined using the following \emph{threshold function}:
	
	$$\delta(s, E', d_s, t_s)=\left\{
	\begin{array}{ll}
	d_s & \mbox{ if } \sum_{(b, s) \in E'} w(e)\geq t_s,\\
	0 & \mbox{ otherwise}.
	\end{array}
	\right.$$
	
	\noindent We refer to the function $\discounts$ that maps each shop $s$ to the pair $(d_s,t_s)$ as the \emph{discount function}. The problem we study is formally stated below, and  generalises well-studied problems such as {\sc bin covering}~\cite{ASSMANN1984502} and {\sc $H$-index manipulation}~\cite{VANBEVERN201619}.
    
    \medskip
    
	\decisionproblem{\pblong}{
		an edge-weighted bipartite graph $G=(B\cup S, E, w)$;
		a discount function $\discounts$; 
		a bound $K\in\mathbb{N}$.
	}{
		is there a subset $E'\subseteq E$ that covers each element of $B$ exactly once and such that $\sum_{e\in E'} w(e)-\sum_{s\in S} \delta(s, E', d_s, t_s)\leq K$?}
	
    \section{Hardness Results}
    \label{sec:hardness-results}
	
    We prove in this section several hardness results under various restrictions, both with regards to classical complexity theory and parameterised complexity theory. 
	
	We 
    show that \pbabbrev is \NP-complete even if there are only two shops to choose from. For this first hardness result we need book prices to be encoded in binary (i.e. they can be exponentially high compared to input size).
    
	\begin{proposition}\label{prop:problem-is-hard-for-two-shops}
		\pblong is \NP-complete in the weak
        sense (\emph{i.e.}, prices are encoded in binary), even when
		$|S|=2$.
      
	\end{proposition}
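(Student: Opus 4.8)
The plan is to establish membership in \NP\ first, then give a polynomial-time many-one reduction from {\sc Partition}, which is weakly \NP-hard and whose instances are numbers naturally written in binary --- this matches the hypothesis that prices may be exponentially large. Membership is routine: a candidate solution $E'$ is a subset of $E$, and one checks in polynomial time that every book has degree exactly one in $E'$, computes the total price $\sum_{e\in E'}w(e)$ together with, for each of the two shops, whether its threshold is met, and finally compares $\sum_{e\in E'}w(e)-\sum_{s\in S}\delta(s,E',d_s,t_s)$ against $K$.

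For hardness, I would start from a {\sc Partition} instance consisting of positive integers $a_1,\ldots,a_n$ with $\sum_{i}a_i=2T$ (if the sum is odd the instance is trivially negative), and build a \pbabbrev instance as follows: take $B=\{b_1,\ldots,b_n\}$, two shops $S=\{s_1,s_2\}$, make each book $b_i$ available in \emph{both} shops at the same price $w(b_i,s_1)=w(b_i,s_2)=a_i$, set both thresholds to $t_{s_1}=t_{s_2}=T$, both discounts to $d_{s_1}=d_{s_2}=1$, and put $K=2T-2$. This is clearly polynomial, has $|S|=2$, and the only numbers introduced are the $a_i$ themselves (plus $T$ and $K$), so the binary-encoding hypothesis is respected.

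The correctness argument rests on a single observation: since every book is bought exactly once and costs the same in either shop, the price term $\sum_{e\in E'}w(e)$ equals $2T$ for \emph{every} feasible $E'$; hence the objective is $2T$ minus the total discount, and because there are only two shops, each contributing at most $1$, the bound $K=2T-2$ is met if and only if both shops reach their threshold. If $E'$ buys books of total price $\sigma_1$ in $s_1$ and $\sigma_2=2T-\sigma_1$ in $s_2$, then ``both thresholds reached'' means $\sigma_1\ge T$ and $\sigma_2\ge T$, which together with $\sigma_1+\sigma_2=2T$ forces $\sigma_1=\sigma_2=T$, i.e.\ the set of books bought in $s_1$ is a valid {\sc Partition} witness; conversely, any such witness yields an $E'$ of cost exactly $2T-2=K$. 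I do not expect a genuine obstacle here --- the construction is essentially a direct encoding of {\sc Partition} into the discount structure --- the only point needing a line of care is this equality-forcing step, namely that the ``$\ge$'' in the threshold function, combined with the fact that the total price is a fixed constant, collapses ``both thresholds met'' to an exact balanced split.
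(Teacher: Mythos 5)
Your reduction is exactly the paper's: two shops, each book sold in both at its {\sc Partition} weight, threshold equal to half the total, unit discounts, and budget equal to total minus two; you merely spell out the NP-membership check and the equality-forcing step that the paper leaves implicit. Correct, and essentially identical to the published proof.
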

	\begin{proof}[reduction from {\sc Partition}]
        Recall the well-known \NP-complete {\sc Partition}
		problem~\cite{DBLP:conf/coco/Karp72}: 
		given a finite set $A$ and a size $\omega(a)\in\mathbb{N}$ for 
        each element in $A$,
        decide whether there exists
		a subset $A'\subseteq A$ such that
		$\sum_{a\in A'}\omega(a)=\sum_{a\in A\setminus A'}\omega(a)$.
		
		Let $\mathcal{I}=(A, \omega)$ be an instance of {\sc Partition}, 
		and $T=\sum_{a\in A}\omega(a)$.
		We obtain an instance $\mathcal I'$ of \pblong as
		follows: 
        introduce two shops $s_1$ and $s_2$ 
        with $(d_{s_1}, t_{s_1})=(d_{s_2}, t_{s_2})=(1, T/2)$. 
		Each item $a \in A$ is a book that shops $s_1$ and $s_2$
		sell for the same price --- namely, $\omega(a)$. 
		It is now 
        clear that
		there exists a subset $A'\subseteq A$ such that
		$\sum_{a\in A'}\omega(a)=\sum_{a\in A\setminus A'}\omega(a)$
		if and only if
        all books can be purchased for a total cost of $T-2$.
	\qed
    \end{proof}
	
This \NP-hardness result allows 
arbitrarily 
high prices (the reduction from {\sc Partition} 
requires 
prices of the 
order of $2^{|B|}$). In a more realistic setting, we might assume a polynomial bound on prices, i.e., they can be encoded in unary. 
As we show below, the problem remains hard for a few shops 
in the sense of \W[1]-hardness. We 
complement this result with an {\XP} algorithm in \Cref{prop:xp-fewshops}.

\begin{proposition}\label{prop:problem-is-strongly-hard-for-k-shops}
	{\pblong}	 is \W[1]-hard for $m=|S|$ in the strong sense 
    (\emph{i.e.}, even when prices are encoded in unary).         
\end{proposition}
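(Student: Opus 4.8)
The plan is to give a parameterised reduction from {\sc Unary Bin Packing} parameterised by the number of bins, shown \W[1]-hard by Jansen, Kratsch, Marx and Schlotter (2013): given item sizes $a_1,\dots,a_n\in\mathbb N$ written in unary, a capacity $c\in\mathbb N$ and a number of bins $k$, decide whether the items can be partitioned into $k$ groups of total size at most $c$ each. We may assume $\sum_{i=1}^n a_i=kc$: if $\sum_i a_i>kc$ the instance is a trivial no-instance, and otherwise we pad it with $kc-\sum_i a_i$ items of size $1$, which changes neither feasibility nor the fact that every bin of a valid packing is then filled to exactly $c$; since $c$ is written in unary and we may assume $k\le n$, all these quantities stay polynomial, so unary encoding is preserved.

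From such an instance I would build the following \pbabbrev instance. The books are $b_1,\dots,b_n$ and the shops are $s_1,\dots,s_k$; every book is sold in every shop, i.e. $E=\{b_1,\dots,b_n\}\times\{s_1,\dots,s_k\}$, with $w(b_i,s_j)=a_i$ for all $j$. Each shop $s_j$ grants a discount of $1$ as soon as its threshold $c$ is met, that is $\discounts(s_j)=(d_{s_j},t_{s_j})=(1,c)$. Finally set the bound to $K=kc-k$. This construction is computed in polynomial time, the prices are the item sizes and hence encoded in unary, and the parameter of the output is $|S|=k$, a function of the source parameter alone.

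For correctness I would use the fact that, whatever the selection $E'$, its gross cost equals $\sum_{e\in E'}w(e)=\sum_i a_i=kc$, independently of the assignment of books to shops; thus the net cost is $kc-\sum_{j=1}^k\delta(s_j,E',1,c)$, which is at most $K=kc-k$ if and only if the total discount is at least $k$. Since there are only $k$ shops, each contributing $0$ or $1$, this forces \emph{all} of them to reach their threshold, i.e. the groups $B_j=\{b_i:(b_i,s_j)\in E'\}$ each satisfy $\sum_{b_i\in B_j}a_i\ge c$; as these groups partition a total of $kc$, each sums to exactly $c$, which is precisely a valid packing of the (normalised, hence of the original) {\sc Bin Packing} instance. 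Conversely, any valid packing, read as ``book $b_i$ is bought in shop $s_j$ when item $i$ lies in bin $j$'', yields a selection of net cost exactly $kc-k=K$. Hence the two instances are equivalent, and since \W[1]-hardness only requires a parameterised reduction from a \W[1]-hard problem, we are done.

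The construction and the equivalence are routine; the only point that needs a little care is the normalisation of {\sc Bin Packing} to its exact variant, where the padding must be done with many unit-size items rather than a single large one, so that the numbers — and hence the length of the unary encoding — remain polynomially bounded. Choosing {\sc Unary Bin Packing} parameterised by the number of bins as the source is what makes both constraints of the statement fall out at once: its parameter becomes the number of shops and its unary sizes become the unary book prices.
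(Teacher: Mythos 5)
Your proposal is correct and follows essentially the same route as the paper: a parameterised reduction from {\sc Unary Bin Packing} with parameter the number of bins, mapping items to books sold everywhere at their item size, bins to shops with threshold $c$ and discount $1$, and budget $kc-k$ so that all $k$ discounts must be collected. The only cosmetic difference is that you justify the normalisation $\sum_i a_i = kc$ by padding with unit items, whereas the paper directly invokes the result of Jansen et al.\ that {\sc Bin Packing} remains \W[1]-hard under that restriction.
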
 
\begin{proof}[reduction from {\sc Bin Packing}]
    Recall the well-known {\sc Bin Packing} problem:     given 
    $n$ items with weights $w_1, w_2, \ldots, w_n$ and
    $m$ bins  with the same given capacity $W$, decide whether each item can be assigned to a bin so that the total weight of the items in any bin does not exceed $W$. 
    {\sc Bin Packing} is \NP-complete in the strong sense and \W[1]-hard for parameter $m$, even when $\sum_{i=1}^nw_i = mW$ and all weights are encoded in unary~\cite{DBLP:journals/jcss/JansenKMS13}.
    
    We build an instance $\mathcal{I}$ of \pblong from an instance of {\sc Bin Packing} with the aforementioned restrictions as follows. 
    Create $m$ identical shops, each  with $t_s=W$ and $d_s=1$. 
    Create $n$ books, where book $i$ is available in every shop at price $w_i$. 
    The budget is $m(W-1)$. In other words, any solution requires to obtain the discount from every shop, which is only possible if purchases amount to a total of exactly $W$ per shop before discount.
    Therefore, the solutions to $\mathcal{I}$ 
    correspond exactly to the solutions of the original instance of {\sc Bin Packing}.
\qed
\end{proof}

We can obtain another hardness result under the assumption that all books are sold at a unit price. Here we cannot bound the total number of shops (we give an {\FPT} algorithm for parameter $m$ in \Cref{prop:fpt-fewshops} in this setting), but only the number of \emph{chosen} shops (i.e., shops where at least one book is purchased). 

\begin{proposition}\label{prop:problem-is-param-hard-for-k-selected-shops}
	\pbabbrev with unit prices is \W[1]-hard for the parameter 
    ``\emph{number of chosen shops}''.       
\end{proposition}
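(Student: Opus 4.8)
The plan is to give a parameterised reduction from {\sc Independent Set}, which is \W[1]-hard with respect to the solution size $k$, to \pbabbrev with unit prices, where we additionally ask for a solution involving at most $\kappa$ distinct shops and take $\kappa$ as the parameter. Starting from a graph $G'=(V,E')$ and an integer $k$, I create one shop $s_v$ per vertex $v\in V$, one extra ``sink'' shop $s_0$, a book $c_{uv}$ for every edge $\{u,v\}\in E'$ that is on sale exactly in $s_u$, $s_v$ and $s_0$, and one book $c_0$ on sale only in $s_0$. Every price is $1$. Shop $s_v$ has discount $1$ and threshold $\deg(v)$ (the degree of $v$ in $G'$), so it triggers its discount only when it is assigned \emph{all} of the $\deg(v)$ edge-books incident to $v$; shop $s_0$ is given a threshold larger than the total number of books, so it never yields a discount. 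I set $\kappa=k+1$ and the budget $K=|E'|+1-k$; since the total price is $|E'|+1$, staying within budget is equivalent to accumulating a total discount of at least $k$.

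For the forward direction, an independent set $C$ of size $k$ yields the solution that uses the shops $\{s_0\}\cup\{s_v:v\in C\}$: since $C$ is independent, every edge has at most one endpoint in $C$, so I can route $c_{uv}$ to the chosen shop $s_v$ with $v\in C$ when one exists and to $s_0$ otherwise, and $c_0$ to $s_0$. Each $s_v$ with $v\in C$ then receives exactly its $\deg(v)$ incident edge-books, hence earns its discount, the total discount is $k$, the budget is met, and only $k+1$ shops are used.

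For the converse, take any feasible solution with at most $\kappa=k+1$ chosen shops. Shop $s_0$ must be chosen, since it alone sells $c_0$, and it contributes nothing to the discount; hence at most $k$ vertex-shops are used, each contributing at most $1$, so meeting the discount target $k$ forces exactly $k$ vertex-shops, each reaching its threshold. But the only books a vertex-shop $s_v$ can be assigned are the $\deg(v)$ books $c_{uv}$ with $u$ adjacent to $v$, so reaching threshold $\deg(v)$ means receiving all of them. If two chosen vertices $u$ and $v$ were adjacent, the book $c_{uv}$ would have to be assigned to both $s_u$ and $s_v$, contradicting that each book is covered exactly once; hence the chosen vertices form an independent set of size $k$. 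Since all constructed numbers are bounded by a polynomial in the input size the reduction is strongly polynomial, and we may assume $G'$ has no isolated vertices so that every threshold is a positive integer.

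The delicate point — and essentially the only place the argument can break — is the coupling between the cardinality bound $\kappa$ and the discount budget: the reduction works precisely because using a vertex-shop ``productively'' (in a way that helps meet the budget) spends one unit of the shop budget and simultaneously forces a complete, conflict-prone load of edge-books onto that shop. I will therefore take care to argue that no feasible solution can afford $s_0$ together with fewer than $k$ fully-loaded vertex-shops, and that a fully-loaded vertex-shop has no room left to ``share'' an edge-book with an adjacent chosen vertex.
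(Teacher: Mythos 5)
Your proof is correct, but it takes a genuinely different route from the paper's. The paper reduces from {\sc Perfect Code}: it builds the bipartite closed-neighbourhood incidence graph of $G$ (book $b_j$ available at shop $s_i$ iff $u_j\in N[u_i]$), gives every shop a unit discount with threshold $d_G(u_i)+1$ so that a shop earns its discount only by selling its entire closed neighbourhood, and identifies the chosen shops with the code vertices; the number of chosen shops then coincides with the total discount, which is how the paper controls the parameter without an explicit cardinality bound in the question. You instead reduce from {\sc Independent Set}, place one book per edge, and introduce a sink shop $s_0$ (always chosen, never discounting) to absorb the edge-books not claimed by a selected vertex-shop; the threshold $\deg(v)$ forces a productive shop to swallow all of its incident edge-books, so two adjacent selected shops would have to share a book, which the exact-cover constraint forbids. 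Both arguments are sound. Yours has the advantage of starting from the more familiar \W[1]-hard problem and of a very direct conflict argument; it is also robust to the formalisation of the parameter, since (as your converse shows) a no-instance of {\sc Independent Set} yields a \pbabbrev instance with no feasible solution within budget at all, so one does not even strictly need the bound $\kappa$ in the question. The paper's version is slightly leaner (no auxiliary shop, parameter exactly $k$ rather than $k+1$) and doubles as the source of the observation that the parameter equals the discount obtained. Your two housekeeping remarks --- removing isolated vertices, and the coupling between the shop budget and the discount budget --- are exactly the right points to make explicit.
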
         

\begin{proof}[reduction from {\sc Perfect Code}]
    Given a graph $G = (V, E)$ and a positive integer $k$, 
    {\sc Perfect Code} asks for a size-$k$ subset $V' \subseteq V$ such
    that for each vertex $u \in V$ there is precisely one vertex in 
    $N[v] \cap V'$ (where $N[v]$ is the \emph{closed} neighbourhood of $v$, i.e., $v$ and its adjacent vertices, as opposed to the \emph{open} neighbourhood $N(v)=N[v]\setminus \{v\}$). 
    This problem is known to be \W[1]-hard for parameter~$k$~\cite{DBLP:journals/ipl/Cesati02}.
    
    Let $\mathcal{I} = (G = (V,E), k)$ be an instance of {\sc Perfect Code}.
    Write $V = \{u_1, u_2, \ldots, u_n\}$.
    We obtain an instance $\mathcal{I}'$ of \pbabbrev as follows.
    Let us first define a bipartite graph $G' = (B \cup S, E')$ where
    $B = \{b_i : u_i \in V\}$, $S = \{s_i : u_i \in V\}$ and
    $E' = \{\{b_j,s_i\} : u_j \in N_G[u_i]\}$.
	All shops sell books at a unit price.
    As for the discount function, for each shop $s_i \in S$ we have 
    $\mathscr{D}(s_i) = (d_G(u_i)+1, 1)$ (\emph{i.e.,} a unit discount will 
    be applied
    from $d_G(u_i) + 1$ of purchase).     
    \Cref{fig:perfect code} illustrates the construction.

\begin{figure}[t]
        \centering

  \begin{tikzpicture}
    [
        scale=.5, 
                shorten >=2pt,
                shorten <=2pt,
                node distance=.5in
    ]
    \begin{scope}
        \node [black vertex] [label=left:$u_1$] (u1) at (0,0) {};
        \node [white vertex] [label=above:$u_2$] (u2) at (1.1,1.1) {};
        \node [white vertex] [label=below:$u_3$] (u3) at (1.1,-1.1) {};
        \node [white vertex] [label=above:$u_4$] (u4) at (2.2,0) {};
        \node [black vertex] [label=right:$u_5$] (u5) at (3.3,0) {};

        \path [edge] (u1) to (u2);
        \path [edge] (u1) to (u3);
        \path [edge] (u2) to (u3);
        \path [edge] (u2) to (u4);
        \path [edge] (u3) to (u4);
        \path [edge] (u4) to (u5);
    \end{scope}

    \begin{scope}[xshift=9.5cm]
        \node[bvertex] [label=left:$b_1$] (b1) at (0,-1)
                {\includegraphics[width=.3cm]{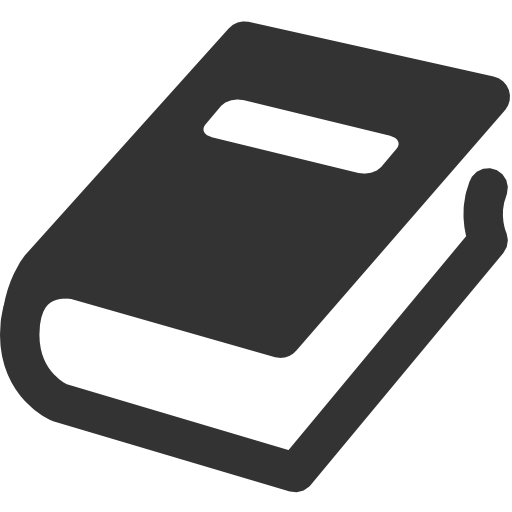}};
        \node[svertex] [label=left:$s_1$] (s1) at (0,1)
                {\includegraphics[width=.3cm]{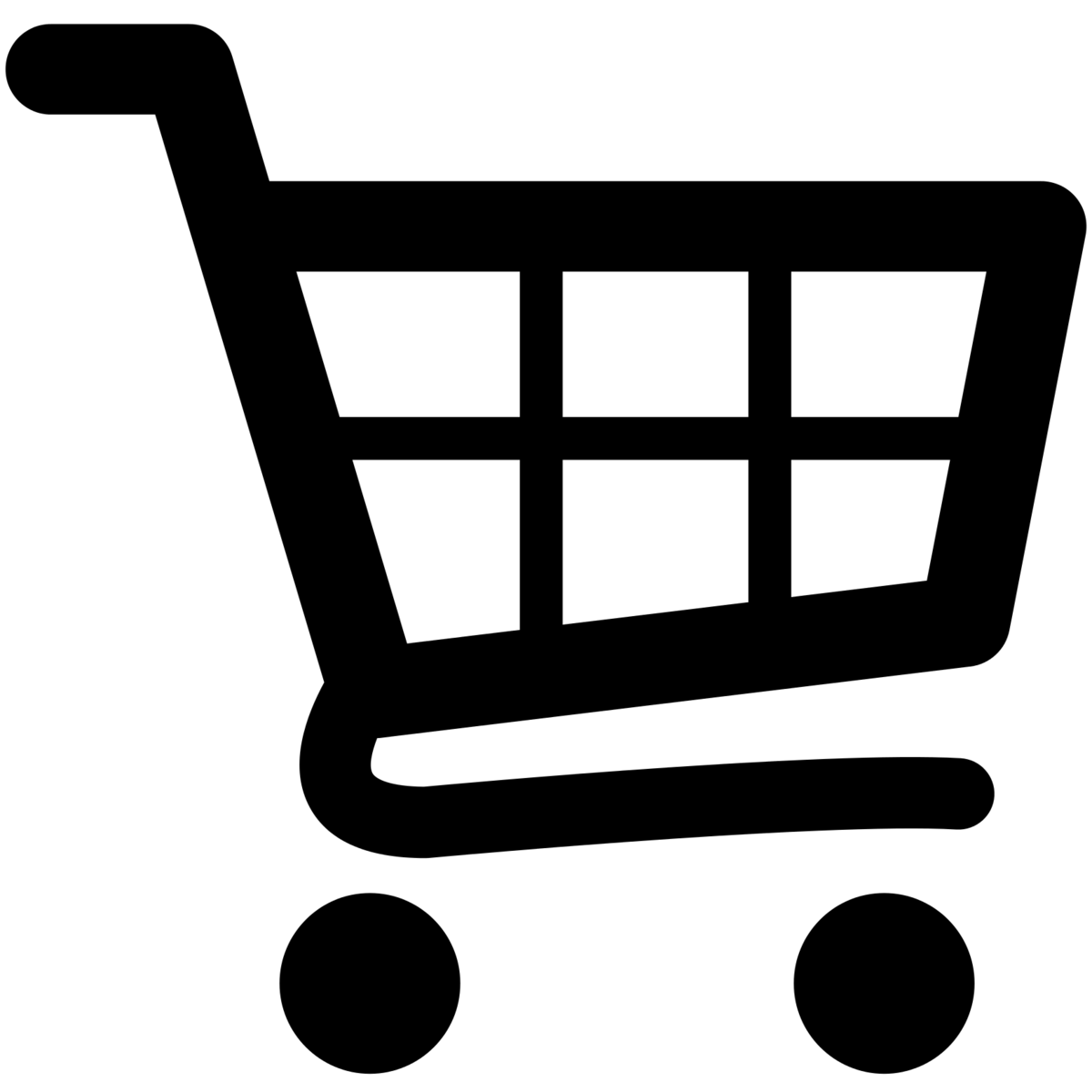}};
 
        \node[bvertex] [label=above:$b_2$] (b2) at (2,3)
                {\includegraphics[width=.3cm]{book}};
        \node[svertex] [label=above:$s_2$] (s2) at (4,3)
                {\includegraphics[width=.3cm]{basket}};
            
        \node[svertex] [label=below:$s_3$] (s3) at (2,-3)
                {\includegraphics[width=.3cm]{basket}};
        \node[bvertex] [label=below:$b_3$] (b3) at (4,-3)
                {\includegraphics[width=.3cm]{book}};
            
        \node[svertex] [label=left:$s_4$] (s4) at (6,-1)
                {\includegraphics[width=.3cm]{basket}};
        \node[bvertex] [label=left:$b_4$] (b4) at (6,1)
                {\includegraphics[width=.3cm]{book}};
            
        \node[bvertex] [label=right:$b_5$] (b5) at (8,-1)
                {\includegraphics[width=.3cm]{book}};
        \node[svertex] [label=right:$s_5$] (s5) at (8,1)
                {\includegraphics[width=.3cm]{basket}};
   
                \path [selected edge] (s1) to node [left] {$1$} (b1);
        \path [edge] (b2) to node [above] {$1$} (s2);
        \path [edge] (b3) to node [below] {$1$} (s3);  
        \path [edge] (s4) to node [left] {$1$} (b4);
        \path [selected edge] (s5) to node [left] {$1$} (b5);
                \path [edge] (b1) to node [left,pos=0.4] {$1$} (s2);        
                \path [selected edge] (s1) to node [above left] {$1$} (b2); 
        \path [edge] (b1) to node [below left,pos=0.4] {$1$} (s3);        
                \path [selected edge] (s1) to node [left,pos=0.4] {$1$} (b3); 
        \path [edge] (b2) to node [right] {$1$} (s3);        
                \path [edge] (s2) to node [left] {$1$} (b3); 
                \path [edge] (b2) to node [left,pos=0.4] {$1$} (s4);        
                \path [edge] (s2) to node [above right] {$1$} (b4); 
                \path [edge] (b3) to node [below right] {$1$} (s4);        
                \path [edge] (s3) to node [left,pos=0.4] {$1$} (b4);         
                \path [selected edge] (b4) to node [above] {$1$} (s5);        
                \path [edge] (s4) to node [below] {$1$} (b5); 
        \end{scope}

  \end{tikzpicture}
  \caption{\label{fig:perfect code}%
        Reducing {\sc Perfect Code} to \pbabbrev.
    Left: The input graph with a size-$2$ perfect code (bold).
    Right: The corresponding bipartite graph and  a solution
    with total cost $5 - 2 = 3$ (bold).
  }
\end{figure}

    We claim that there exists a size-$k$ perfect code for $G$ if and 
    only if all books can be bought for a total cost of $n-k$.

        \noindent\fbox{$\Rightarrow$}
    Let $V' \subseteq V$ be a size-$k$ perfect code in $G$.
    For every $u_i \in V$, let $u_{\pc(i)}$ be the unique vertex in 
    $N[v] \cap V'$ ($\pc$ is well-defined since $V'$ is a perfect code).
    Then buying each book $b_i \in B$
    at shop $b_{\pc(i)}$ yields a solution for $\mathcal{I}'$, and it is simple to check that its cost is $n-k$.
    
        \noindent\fbox{$\Leftarrow$}
        Suppose that all books can be bought for a total cost of $n-k$.
    Since  $n$ books must be bought at unit price and shops only offer  a unit discount,
    $k$ shops must be chosen in the solution.
    Let $S' \subseteq S$  denote  these $k$ shops.
    Since $\mathscr{D}(s_i) = (1, d_G(u_i)+1)$ for each shop $s_i \in S$,
    we conclude that for each book $b_i \in B$ there is precisely one shop in
    $N[b_i] \cap S'$.
    Then 
    $\{u_i : s_i \in S'\}$ is a size-$k$ perfect code in $G$.
    
    Note that the number of visited shops corresponds exactly to the total discount received (i.e. to parameter $k$ in the reduction).
\qed
\end{proof}

    We now prove\footnote{See \Cref{app:proof-no-poly-kernel} for the proof.} the non-existence of polynomial kernels (under  standard complexity assumptions) for  \pblong parameterised by the   number of books. To this end, we use the {\sc or-composition}  technique~\cite{DBLP:conf/stacs/BodlaenderJK11}: given a problem $\mathscr P$ and a parameterised problem $\mathscr Q$, an {\sc or-composition} is a reduction taking $t$ instances $(I_1, \ldots, I_t)$ of $\mathscr P$, and building an instance $(J, k)$ of $\mathscr Q$, with $k$ bounded by a polynomial on $\max_{t'\leq t} |I_{t'}|+\log t$, such that $(J,k)$ is a yes-instance if and only if there exists $t'\leq t$ such that $I_{t'}$ is a yes-instance. 
    If 
    $\mathscr P$ is \NP-hard, then $\mathscr Q$ does not admit a polynomial kernel unless $\NP \subseteq \coNP/\mathrm{poly}$~\cite{DBLP:conf/stacs/BodlaenderJK11}.
    
	\begin{proposition}\label{prop:no-poly-kernel}
		\pblong admits no polynomial kernel unless $\NP \subseteq \coNP/\mathrm{poly}$.
	\end{proposition}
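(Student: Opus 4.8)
The plan is to give an \textsc{or-composition} into \pblong parameterised by the number of books, starting from the \NP-complete \textsc{Partition} problem (recall from \Cref{prop:problem-is-hard-for-two-shops} that a single \textsc{Partition} instance is already essentially a \pblong instance on two shops). So take $t$ instances $I_1,\ldots,I_t$ of \textsc{Partition}. First I would normalise them: by scaling all element sizes and padding each instance with a few auxiliary elements, I may assume that every $I_j$ has the same number of elements $n_0$ and the same total weight $\hat T$ (these operations only blow up sizes polynomially, and can moreover be arranged so that $I_j$ is a yes-instance exactly when its elements admit an \emph{equal-size} split into two parts of weight $\hat T/2$ --- this balanced version simplifies the threshold bookkeeping later).

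Next I would build a single \pblong instance $J$ with only $n_0+O(\log t)$ books --- this is the whole point, since $n_0\le\max_j|I_j|$ --- but with many shops. It has $n_0$ \emph{item books} $b_1,\ldots,b_{n_0}$ shared by all the $I_j$; for each $j$ a pair of shops $s_1^{(j)},s_2^{(j)}$ playing the role of the two sides of $I_j$'s partition (as in \Cref{prop:problem-is-hard-for-two-shops}), with $b_i$ available at both for price $\omega_j(i)+P_j$, where $P_j$ is a large instance-specific ``tag''; and $O(\log t)$ \emph{selector books} sitting on a binary-choice gadget built from $O(\log t)$ pairs of small shops, whose choices are meant to spell out one index $j^\star\in[t]$ and whose discounts are tuned to refund exactly $n_0P_{j^\star}$. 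The thresholds and discounts of $s_1^{(j)},s_2^{(j)}$ are those of \Cref{prop:problem-is-hard-for-two-shops}, shifted by the tags, so that both shops of instance $j$ reach their thresholds precisely when the item books bought there realise an equal split of $I_j$; the budget is $\hat T-2$.

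The easy direction: if some $I_{j^\star}$ is a yes-instance, buy the item books according to an equal split at $s_1^{(j^\star)},s_2^{(j^\star)}$ and set the selector books to spell $j^\star$; a direct computation gives total cost exactly $\hat T-2$, so $J$ is a yes-instance. For the converse I would show that any solution of cost $\le\hat T-2$ is forced to (i) buy \emph{all} item books from the two shops of a single instance $j^\star$, and then (ii) realise an equal split of $I_{j^\star}$ there, whence $I_{j^\star}$ is a yes-instance. Step (ii) is a routine variant of the threshold analysis of \Cref{prop:problem-is-hard-for-two-shops}: a shop reaches its threshold only if it holds enough item books, which both caps the total discount at $2$ and forces the split.

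The heart of the argument, and the step I expect to be the real obstacle, is (i): ruling out solutions that ``mix'' shops of different instances (or abuse the selector gadget to harvest unused refunds). The intended mechanism is to take the base-$Q$ expansion of the bill for a base $Q$ larger than $\hat T$ and $n_0$, and to choose the tags $P_j$ as sums of distinct powers of $Q$ indexed by a \emph{constant-weight} code on $O(\log t)$ symbols; then the tag contributions of the item books line up with the selector's refund only if every item book is bought inside one and the same instance, and no tag pattern is a sub-pattern of another so the selector cannot be ``overloaded'' profitably. Making the magnitudes of the tags, the selector discounts and the budget all fit together so that every mixed solution and every inconsistent selector choice \emph{strictly} exceeds the budget while the honest solution meets it exactly is the delicate part; the rest is arithmetic. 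Since the output has $n_0+O(\log t)=O(\max_j|I_j|+\log t)$ books, this is a legitimate \textsc{or-composition}, and the no-polynomial-kernel conclusion follows from \cite{DBLP:conf/stacs/BodlaenderJK11}.
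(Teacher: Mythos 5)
Your overall strategy (an {\sc or-composition} into \pblong parameterised by the number of books, with instance ``identifiers'' of size $O(\log t)$ encoded in binary) is the same as the paper's, but the paper composes {\sc Exact Cover By 3-Sets} rather than {\sc Partition}, and -- crucially -- it spends $\Theta(n\log t)$ \emph{extra books} $x^i_j$ (one per item per identifier bit), not $O(\log t)$. Each identifier bit of each item is a physical book that must be covered exactly once: either it is bundled with item $i$ at an instance shop (which commits that bit to the shop's key), or it must be bought at a dedicated shop $\sigma_j$. Prices and discounts are calibrated (threshold $=$ total inventory value, discount $=1$ per book) so that a shop is worth visiting only if one buys its \emph{entire} inventory, and the budget leaves zero slack. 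Consistency between ``which instance'' and ``where each item is bought'' is therefore enforced by the exact-cover constraint on books, not by arithmetic on prices.

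Your construction replaces these $\Theta(n\log t)$ books by a numeric ``tag'' $P_j$ added to every price in instance $j$ plus an $O(\log t)$-book selector gadget that refunds $n_0P_{j^\star}$, and this is where the argument breaks -- at precisely the step you flag as delicate. In \pblong a discount is a lump sum triggered by a threshold on the purchases made \emph{at that shop}; the selector shops sell only selector books, so whether the refund $n_0P_{j^\star}$ is triggered is completely decoupled from where the item books were bought. Concretely: distinct tags built from distinct subsets of powers of $Q$ necessarily have distinct \emph{values}, so there exist $j,j^\star$ with $P_{j^\star}>P_j$, and then $P_{j^\star}-P_j\ge 1$, hence $n_0(P_{j^\star}-P_j)\ge n_0$, which dwarfs the partition discount of $2$. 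A shopper can buy all item books from the cheap-tag (possibly no-) instance $j$, point the selector at $j^\star$, and land under budget even when every $I_h$ is a no-instance; the ``no codeword is a sub-pattern of another'' property controls digit patterns of sums, but the feasibility test is an \emph{inequality} on a single number, so under-paying tags while harvesting a full refund is always profitable. Fixing this would require either all tags to have equal value (impossible for distinct subsets of powers of $Q$) or a mechanism that charges the shopper $n_0Q^p$ for each refunded position $p$ in a way tied to the item purchases -- which is exactly what the paper's per-item identifier books accomplish. As written, the composition's backward direction fails, so the proof has a genuine gap.
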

    
\section{Positive Results}

We now give exact algorithms for \pbabbrev: a polynomial-time algorithm for the case where every shop sells at most two books, and three parameterised algorithms 
based respectively on the number of books, the  number of shops, and a bound on the prices.

    We give a polynomial time algorithm for the case where each shop sells at most two books. As we shall see in \Cref{sec:approx}, this bound is best possible. Its running time is dominated by the time required to find a maximum matching in a graph with $|B\cup S|$ vertices.
    
    \begin{proposition}
	\label{proposition:Clever Shopper each store sells 2 books}
	\pbabbrev is in \P\ if every shop sells at most two books.
	\end{proposition}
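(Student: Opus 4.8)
The plan is to solve the problem with a single maximum-weight matching computation in a graph on $|B\cup S|$ vertices. Throughout, write $[P]=1$ if the predicate $P$ holds and $0$ otherwise. First I would dispose of the shops that sell a single book: if $s$ sells only $b$, then its discount $d_s$ is obtained in a solution exactly when $b$ is bought at $s$ and $w(b,s)\ge t_s$, a condition independent of the rest of the solution; I may therefore replace the price of the edge $(b,s)$ by the ``effective price'' $g(b,s)=w(b,s)-d_s\cdot[w(b,s)\ge t_s]$ and discard the discount of $s$ altogether.

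The key structural observation concerns a shop $s$ selling exactly two books $b_1,b_2$ at prices $w_1,w_2$. Setting $g(b_i,s)=w_i-d_s\cdot[w_i\ge t_s]$, the cost charged to $s$ in a solution is $0$ if neither of its books is bought there, $g(b_1,s)$ (resp.\ $g(b_2,s)$) if only $b_1$ (resp.\ $b_2$) is, and $g(b_1,s)+g(b_2,s)+h(s)$ if both are, where the pairwise correction $h(s)=d_s\bigl([w_1\ge t_s]+[w_2\ge t_s]-[w_1+w_2\ge t_s]\bigr)\in\{-d_s,0,d_s\}$. In other words, the choices made for distinct books interact only through this single term per two-book shop, and only when both of that shop's books are bought there.

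I would then build a graph $H$ with a vertex $v_b$ for each book $b$, a vertex $u_s$ for each shop $s$, and a constant $C$ chosen larger than the total of all prices and discounts in the instance. For a one-book shop $s=\{b\}$, add the edge $\{v_b,u_s\}$ of weight $C-g(b,s)$. For a two-book shop $s=\{b_1,b_2\}$, add the edges $\{v_{b_1},u_s\}$ and $\{v_{b_2},u_s\}$ of weights $C-g(b_1,s)$ and $C-g(b_2,s)$, together with the direct edge $\{v_{b_1},v_{b_2}\}$ of weight $2C-g(b_1,s)-g(b_2,s)-h(s)$. In any matching of $H$, the dummy $u_s$ can be used by at most one of the first two edges, so the local configuration at $s$ is exactly one of the four purchasing states --- ``neither'', ``$b_1$ only'', ``$b_2$ only'', and ``both'' (the last realised through the direct edge) --- and the weight contributed is $C$ per purchased book minus the cost charged to $s$. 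Conversely, every purchasing assignment induces such a matching. Hence a matching that saturates $B$ has weight $|B|\cdot C$ minus the total cost of the corresponding shopping, and the choice of $C$ forces a maximum-weight matching to saturate $B$; it then suffices to compute a maximum-weight matching (by Edmonds' algorithm, the graph being non-bipartite owing to the edges $\{v_{b_1},v_{b_2}\}$) and to compare $|B|\cdot C$ minus its weight against $K$.

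The one point that needs care is the two-book gadget: the matching must be unable to realise ``both books bought at $s$'' by any route other than the direct edge, for otherwise the correction $h(s)$ --- which can be positive --- could be dodged and the cost under-counted. Making the two ``single-book'' edges share the same dummy $u_s$ is exactly what rules this out, since it forbids using both at once. The remaining ingredients --- the additive constant $|B|\cdot C$, the bound making $C$ large enough, and the fact that a feasible assignment exists unless some book is sold nowhere (a trivial no-instance) --- are routine, and the running time is indeed dominated by the matching computation on $|B\cup S|$ vertices. The hypothesis that every shop sells at most two books is best possible, as shown in \Cref{sec:approx}.
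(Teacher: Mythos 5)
Your construction is correct and follows essentially the same route as the paper: reduce to maximum-weight matching in a general graph in which each two-book shop contributes an edge joining its two books (the non-bipartiteness being the crux), then invoke Edmonds' algorithm. The only divergence is bookkeeping --- the paper weights edges as savings relative to each book's cheapest undiscounted price and lets unmatched books default to that baseline, whereas you encode every purchase as an edge and force saturation of $B$ via the large constant $C$; both are sound, though you should note, as the paper does, that when two distinct shops sell the same pair $\{b_1,b_2\}$ the resulting parallel edges must be collapsed to the most favourable one.
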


\tikzstyle{svertex}=[circle,draw,fill=black!23,minimum size=17pt,inner sep=0pt,circular drop shadow]
\tikzstyle{bvertex}=[circle,draw,fill=black!3,minimum size=17pt,inner sep=0pt,circular drop shadow]
\tikzstyle{edge}=[draw,thick]
\tikzstyle{selected edge}=[draw, line width=2.5pt]

\begin{figure}[t]
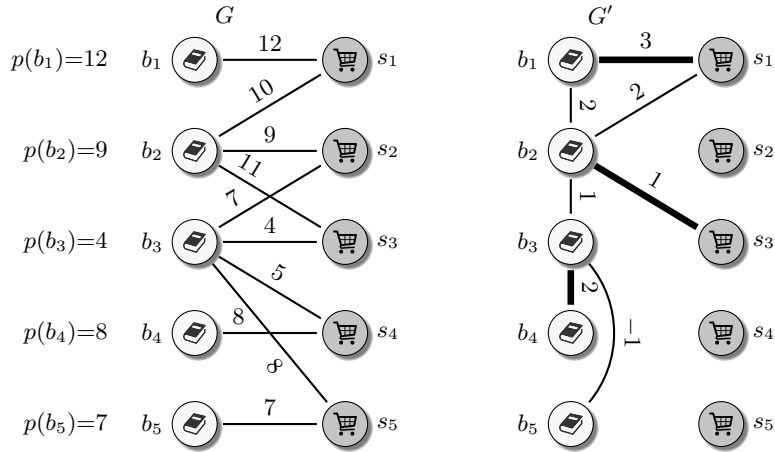

        \centering

    \begin{tikzpicture}
      [
        scale=0.2,
        shorten >=2pt,
        shorten <=2pt,
        node distance=.9in
      ]
            \begin{scope}
              \node at (2,3) {$G$};

        \node [bvertex] [label=left:$b_1$] (b1) at (0,0)
         {\includegraphics[width=.3cm]{book}};
        \node [bvertex,below = 6mm of b1] [label=left:$b_2$] (b2)
         {\includegraphics[width=.3cm]{book}};
        \node [bvertex,below = 6mm of b2] [label=left:$b_3$] (b3)
         {\includegraphics[width=.3cm]{book}};
        \node [bvertex,below = 6mm of b3] [label=left:$b_4$] (b4)
         {\includegraphics[width=.3cm]{book}};
        \node [bvertex,below = 6mm of b4] [label=left:$b_5$] (b5)
         {\includegraphics[width=.3cm]{book}};

        \foreach \p [count=\i] in {12,9,4,8,7}{
           \node [left= 3mm of b\i] {$p(b_\i){=}$\p\quad\ \ };
        }

        \node [svertex] [label=right:$s_1$] (s1) at (10,0)
          {\includegraphics[width=.3cm]{basket}};
        \node[svertex,below = 6mm of s1] [label=right:$s_2$] (s2)
          {\includegraphics[width=.3cm]{basket}};
        \node[svertex,below = 6mm of s2] [label=right:$s_3$] (s3)
          {\includegraphics[width=.3cm]{basket}};
                  \node[svertex,below = 6mm of s3] [label=right:$s_4$] (s4)
          {\includegraphics[width=.3cm]{basket}};
        \node[svertex,below = 6mm of s4] [label=right:$s_5$] (s5)
          {\includegraphics[width=.3cm]{basket}};

        \path [edge]
          (b1) edge [left] node [pos=0.5, sloped, above] {$12$} (s1);
        \path [edge]
          (b2) edge [right] node [pos=0.5, sloped, above] {$10$} (s1);
        \path [edge]
          (b2) edge [left] node [pos=0.5, sloped, above] {$9$} (s2);
        \path [edge]
                (b2) edge [right] node [pos=0.25, sloped, above] {$11$} (s3);
        \path [edge]
          (b3) edge [left] node [pos=0.25, sloped, above] {$7$} (s2);
        \path [edge]
                (b3) edge [right] node [pos=0.5, sloped, above] {$4$} (s3);
        \path [edge]
          (b3) edge [left] node [pos=0.5, sloped, above] {$5$} (s4);
        \path [edge]
          (b3) edge [left] node [pos=0.65, sloped, below] {$8$} (s5);
        \path [edge]
          (b4) edge [right] node [pos=0.20, sloped, above] {$8$} (s4);
        \path [edge]
          (b5) edge [left] node [pos=0.5, sloped, above] {$7$} (s5);
      \end{scope}

      \begin{scope}[xshift=25cm]
              \node at (2,3) {$G'$};
        \node [bvertex] [label=left:$b_1$] (b1) at (0,0)
         {\includegraphics[width=.3cm]{book}};
        \node [bvertex,below = 6mm of b1] [label=left:$b_2$] (b2)
         {\includegraphics[width=.3cm]{book}};
        \node [bvertex,below = 6mm of b2] [label=left:$b_3$] (b3)
         {\includegraphics[width=.3cm]{book}};
        \node [bvertex,below = 6mm of b3] [label=left:$b_4$] (b4)
         {\includegraphics[width=.3cm]{book}};
        \node [bvertex,below = 6mm of b4] [label=left:$b_5$] (b5)
         {\includegraphics[width=.3cm]{book}};

        \node [svertex] [label=right:$s_1$] (s1) at (10,0)
          {\includegraphics[width=.3cm]{basket}};
        \node[svertex,below = 6mm of s1] [label=right:$s_2$] (s2)
          {\includegraphics[width=.3cm]{basket}};
        \node[svertex,below = 6mm of s2] [label=right:$s_3$] (s3)
          {\includegraphics[width=.3cm]{basket}};
                  \node[svertex,below = 6mm of s3] [label=right:$s_4$] (s4)
          {\includegraphics[width=.3cm]{basket}};
        \node[svertex,below = 6mm of s4] [label=right:$s_5$] (s5)
          {\includegraphics[width=.3cm]{basket}};

        \path [selected edge]
          (b1) edge [left] node [pos=0.5, sloped, above] {$3$} (s1);
        \path [edge]
          (b2) edge [right] node [pos=0.5, sloped, above] {$2$} (s1);
        \path [selected edge]
                (b2) edge [right] node [pos=0.5, sloped, above] {$1$} (s3);
        \path [edge]
          (b1) edge [left] node [pos=0.5, sloped, above] {$2$} (b2);
        \path [edge]
          (b2) edge [left] node [pos=0.5, sloped, above] {$1$} (b3);
        \path [selected edge]
          (b3) edge [left] node [pos=0.5, sloped, above] {$2$} (b4);
        \path [edge]
          (b3) edge [left, bend left=40] node [pos=0.5, sloped, above] {$-1$} (b5);
      \end{scope}
    \end{tikzpicture}
  \caption{\label{fig:perfectmatching}%
    Each shop offers a discount of $3$  on
    a purchase of value $\ge 10$.
    Bold edges indicate how to obtain optimal discounts:
    buy book $b_1$ from shop $s_1$,
    book $b_2$ from shop $s_3$,
    and books $b_3$ and $b_4$ from shop $s_4$. The remaining books are bought at their cheapest available price (so here we buy $b_5$ from $s_5$).
    Our clever customer used the discounts to buy all books for $6$ less than if she had bought each  book at its lowest price: $3$ for $b_1$, $1$ for $b_2$, $2$ for $b_3$ and $b_4$ together.
  }
\end{figure}

    \begin{proof}
    Let $\mathcal{I}$ be an instance of \pbabbrev given by
    an edge-weighted bipartite graph $G=(B \cup S, E, w)$
    and a pair $(d_s, t_s)$ for each $s \in S$,
    where $d_s, t_s \in \mathbb{R^+}$. 
    Vertices in $S$ (resp. in $B$) have degree at most 2 (resp. at least 1). Note that vertices in $S$ can be made to have degree exactly 2, by adding dummy edges with arbitrarily high costs, with no impact on the solution. 
    For $b\in B$, let $p(b)$ be the cheapest available price for book $b$ (discount excluded), i.e., $p(b)=\min\{w(\{b, s\})\mid s\in S\}$.
      
  	Construct a new (non-bipartite) graph 
    $G' = (B \cup S, E', w')$, as follows:
    for every shop $s\in S$, let $\{b_1,b_2\}=N_G(s)$ (i.e., the two books available at shop $s$).    
   
    \begin{itemize}
    \item  For each $i\in \{1,2\}$, if $w(\{b_i,s\})\geq t_s$, then add an edge $\{b_i,s\}$ to $E'$ with weight $w'(\{b_i,s\})=d_s + p(b_i)-w(\{b_i,s\})$.  
    \item If  $w(\{b_1,s\})+w(\{b_2,s\})\geq t_s$, 
    add an edge $\{b_1,b_2\}$ to $E'$ with weight $w'(\{b_1,b_2\})=d_s+ p(b_1)-w(\{b_1,s\})+p(b_2)-w(\{b_2,s\})$.   If edge $\{b_1,b_2\}$ existed already, keep only the one with maximum weight. 
    \end{itemize}
    Note that edges with negative weights may remain: they may be safely ignored, but we keep them to avoid case distinctions in the rest of this proof. \Cref{fig:perfectmatching} illustrates the construction. 
Since a maximum weight matching for $G'$ can be found in polynomial time~\cite{Edmonds65}, it is now enough to prove the following claim: 
$G'$ admits a matching of weight at least $W$ if and only if instance $\mathcal{I}$ of \pbabbrev admits a solution of total cost at most $\sum_{b\in B} p(b)-W$.

        \noindent\fbox{$\Leftarrow$}
Assume that instance $\mathcal{I}$ admits a solution $E^*\subseteq E$ of total cost $\sum_{b\in B} p(b)-W$. Note that 
$W\geq 0$ 
(the sum of the minimum prices of the books is an upper bound of the optimal solution). We build a matching $M$ of $G'$ as follows. Let $s\in S$ be any \emph{discount shop}, i.e., a shop whose discount is claimed, and let $b_1$ and $b_2$ be its neighbours. Then at least one of them has to be bought from $s$ to get the discount. 
\begin{itemize} 
\item If $\{b_1,s\}\in E^*$ and $\{b_2,s\}\notin E^*$, add $\{b_1,s\}$ to $M$. The amount spent at this shop is $w(\{b_1,s\})-d_s = p(b_1)-w'(\{b_1,s\})$.
\item Similarly, if $\{b_2,s\}\in E^*$ and $\{b_1,s\}\notin E^*$, add $\{b_2,s\}$ to $M$. The amount spent at this shop is $w(\{b_2,s\})-d_s = p(b_2)-w'(\{b_2,s\})$.
\item Finally, if $\{b_1,s\}\in E^*$ and $\{b_2,s\}\in E^*$, then add $\{b_1,b_2\}$ to $M$. The amount spent at this shop is $w(\{b_1,s\})+w(\{b_2,s\})-d_s \geq p(b_1)+p(b_2)-w'(\{b_1,b_2\})$.
\end{itemize}
Note that edges added to $M$ are indeed present in $E'$, since in order to obtain the discount from $s$, the book prices must satisfy the same condition as for creating the corresponding edges. Note also that $M$ is a matching, since each book can be bought from at most one shop. 
Let $B^*$ be the set of books bought from discount shops. Summing over all these shops, the total  price paid for the books in $B^*$ is at least $\sum_{b\in B^*} p(b) - \sum_{e\in M}w'(e)$. 

The books in $B\setminus B^*$ do not yield any discount, so the total price paid for them is at least $\sum_{b\in B\setminus B^*} p(b)$. Overall, the cost of the books is at least $\sum_{b\in B} p_b- \sum_{e\in M}w'(e)$, therefore $\sum_{e\in M}w'(e)\geq W$.

\noindent\fbox{$\Rightarrow$}
   Let $M$ be a maximum weight matching of $G'$ of weight $W$. For each edge $e\in M$, let $s_e$ be the shop for which $e$ was introduced. For an edge $e=\{b,s_e\}\in M$, buy book $b$ from shop $s_e$. The price is sufficient to reach the threshold for the discount, so we pay $w(\{b,s_e\})-d_e = p(b)-w'(e)$. For an edge $e=\{b_1,b_2\}\in M$, buy books $b_1$ and $b_2$ together from shop $s_e$. We again get the discount, and pay $w(\{b_1,s_e\})+w(\{b_2,s_e\})-d_e = p(b_1)+p(b_2)-w'(e)$. 
   Note that for $e\neq f\in M$, $s_e\neq s_f$, so we never count the same discount twice. For every other book, buy them at the cheapest possible price $p(b)$, without expecting to get any discount.  The total price paid is at most $\sum_{b\in B}  p(b)- \sum_{e\in M}w'(e)= \sum_{b\in B}  p(b) -W$.  
 \qed
 \end {proof}
	
	We now give a dynamic programming \FPT\ algorithm with the number of books as parameter.
	\begin{proposition}
		\pblong admits an \FPT\ algorithm for parameter $n$ with running time $O(m 3^n)$.
	\end{proposition}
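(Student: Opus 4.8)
The plan is a dynamic program over subsets of books that scans the shops one at a time, so that the exponential blow-up is confined to $n=|B|$ while the dependence on $m=|S|$ stays linear. Write $S=\{s_1,\dots,s_m\}$. For a shop $s_j$ and a set $Y\subseteq B$, define $c_j(Y)$ to be the net amount paid when exactly the books of $Y$ are bought at $s_j$: it is $+\infty$ if $Y\not\subseteq N_G(s_j)$; it is $\sum_{b\in Y} w(\{b,s_j\})$ if this sum is below $t_{s_j}$; and it is $\sum_{b\in Y} w(\{b,s_j\}) - d_{s_j}$ otherwise. The point of introducing $c_j$ is that a shop's discount depends only on the set of books bought there, so once these tables are available the discounts require no further bookkeeping.

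I would then define, for $0\le j\le m$ and $X\subseteq B$, the table entry $D_j[X]$ as the minimum total cost (prices minus discounts) of buying exactly the books of $X$ using only $s_1,\dots,s_j$, with $D_j[X]=+\infty$ when no such assignment exists. Initialise $D_0[\emptyset]=0$ and $D_0[X]=+\infty$ for $X\neq\emptyset$, and use the recurrence
\[
D_j[X]=\min_{Y\subseteq X}\bigl(D_{j-1}[X\setminus Y]+c_j(Y)\bigr).
\]
The instance is a yes-instance if and only if $D_m[B]\le K$.

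For correctness I would argue both directions by induction on $j$. Any solution $E'$ induces a partition of $B$ into $Y_1,\dots,Y_m$, where $Y_j$ is the set of books bought at $s_j$; since the net amount paid at $s_j$ is exactly $c_j(Y_j)$, the cost of $E'$ equals $\sum_j c_j(Y_j)$, and an easy induction shows $D_j[Y_1\cup\dots\cup Y_j]$ never exceeds the corresponding partial sum. Conversely, unrolling the recurrence from any finite entry $D_j[X]$ exhibits pairwise disjoint sets whose union is $X$ together with one shop per set, i.e. a genuine assignment covering each book of $X$ exactly once at the claimed cost; hence $D_j[X]$ is indeed the optimum over $s_1,\dots,s_j$.

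Finally, the running time: all values $c_j(Y)$ for a fixed shop can be tabulated in $O(2^n)$ time (enumerate the subsets of $N_G(s_j)$ while maintaining the running price sum, then perform the threshold test), so preprocessing costs $O(m2^n)$. Evaluating the recurrence for a fixed $j$ touches every pair $Y\subseteq X\subseteq B$ exactly once, and $\sum_{X\subseteq B}2^{|X|}=3^n$, so the whole algorithm runs in $O(m3^n)$ time, the main step dominating the preprocessing. The only step that needs a little care — and really the crux of meeting the stated bound — is this last counting argument: one must iterate directly over pairs $Y\subseteq X$ with $O(1)$ work per pair, which is exactly why the costs $c_j(\cdot)$ and the threshold comparisons are computed once in advance rather than inside the double loop.
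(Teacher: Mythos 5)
Your proposal is correct and follows essentially the same route as the paper: your $c_j(Y)$ is the paper's $p_j(B'')$, your $D_j[X]$ is its $p_{\le j}(B')$, and both use the identical recurrence $\min_{Y\subseteq X}\bigl(D_{j-1}[X\setminus Y]+c_j(Y)\bigr)$ with the same $\sum_{X\subseteq B}2^{|X|}=3^n$ counting argument for the running time. The only (harmless) differences are cosmetic: you start the induction at $j=0$ rather than $j=1$, and you make explicit the $+\infty$ convention for books unavailable at a shop, which the paper leaves implicit.
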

	
	\begin{proof}
		Given $j\in[m]$ and $B'\subseteq B$, let $p_j(B')$ be the price for buying all books in $B'$ together from shop $s_j$ (discount included),
		and
		$p_{ \leq j}(B')$  be the lowest price that can be obtained when purchasing all books in $B'$ 
        from a subset of $\{s_1, \ldots, s_j\}$. 
		Our goal is to compute $p_{\leq m}(B)$.
		
		For $j=1$, clearly $p_{ \leq 1}(B')=p_1(B')$ for every $B'$. For any other $j$, consider an optimal way of buying the books in $B'$ from shops $s_1,\ldots, s_j$. 
		This way the customer buys some (possibly empty) subset $B''$ of books in $s_j$, and the rest, i.e., $B'\setminus B''$, at the lowest price from shops $s_1,\ldots, s_{j-1}$. 
        Therefore: 
		$$p_{ \leq j}(B') = \left\{\begin{array}{ll}
		p_j(B')& \mathrm{ if}\ j=1, \\
		\min_{B''\subseteq B'}\{p_j(B'')+p_{\leq j-1}(B'\setminus B'')\}& \mathrm{ otherwise}.
		\end{array}
		\right.
		$$
		
		The values of $p_j(B')$ for all $j$ and $B'$ can be computed in $O(m2^n)$ time. Then the dynamic programming table requires to enumerate, for all $j$, all subsets $B'$ and $B''$ such that $B''\subseteq B' \subseteq B$. 
		Any such pair $B'', B'$ can be interpreted as a vector $v\in\{0,1,2\}^n$, where $i\in B''\Leftrightarrow v_i=2$ and $i\in B'\Leftrightarrow v_i\geq 1$.
		Therefore, filling the dynamic table takes $m3^n$ steps, each requiring constant time.
	\qed
    \end{proof}
	
	As usual with dynamic programming, this algorithm yields the optimal price that can be obtained. One gets the actual solution (i.e., where to buy each book) with classic backtracing techniques.
	
The \NP-hardness of \pblong for two shops (using large prices, encoded in binary) and its \W[1]-hardness when the parameter is the number of shops leave a very small opening for positive results: we can only consider small prices (encoded in unary) for a constant number of shops. The following result proves the tractability of this case.

\begin{proposition}	\label{prop:xp-fewshops}		
\pblong  admits an \XP\ algorithm running in time $O(nm\mathcal W^m)$, where $\mathcal W$ 
is the sum of all the prices of the instance, $n$ is the number of books, and $m$ is the number of shops.
\end{proposition}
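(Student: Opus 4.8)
The plan is to give a knapsack-style dynamic program over the books in which a state records, for each of the $m$ shops, the exact amount of money spent there so far. The first step is to fix arbitrary orderings $b_1, \ldots, b_n$ of the books and $s_1, \ldots, s_m$ of the shops, and to observe that in any candidate solution the amount paid in a single shop is at most $\mathcal{W} = \sum_{e \in E} w(e)$, since the chosen edges form a subset of $E$. Hence a \emph{spending profile} is a vector $\mathbf{c} = (c_1, \ldots, c_m)$ with each $c_j \in \{0, 1, \ldots, \mathcal{W}\}$. For $i \in \{0, \ldots, n\}$ I would define a Boolean table entry $T_i(\mathbf{c})$ that is true exactly when $b_1, \ldots, b_i$ can each be bought from a shop that sells it so that the total spent in shop $s_j$ equals $c_j$ for every $j \in [m]$; the quantity I ultimately want is the minimum cost over all $\mathbf{c}$ with $T_n(\mathbf{c})$ true.

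Second, I would establish the recurrence. The base case is that $T_0(\mathbf{0})$ is true and $T_0(\mathbf{c})$ is false otherwise. For $i \ge 1$, book $b_i$ must be bought from exactly one shop $s_j$ with $\{b_i, s_j\} \in E$; deleting that purchase from a valid assignment for $b_1, \ldots, b_i$ yields a valid assignment for $b_1, \ldots, b_{i-1}$ with the $j$-th coordinate decreased by $w(\{b_i, s_j\})$, and conversely. This gives
$$T_i(c_1, \ldots, c_m) \;=\; \bigvee_{\substack{j \in [m],\ \{b_i, s_j\} \in E \\ c_j \ge w(\{b_i, s_j\})}} T_{i-1}\bigl(c_1, \ldots, c_{j-1},\, c_j - w(\{b_i, s_j\}),\, c_{j+1}, \ldots, c_m\bigr),$$
and a straightforward induction on $i$ confirms that $T_i$ computes the intended predicate; the ``covered exactly once'' condition is automatic, since the procedure assigns each book to a single shop.

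Third, I would read off the answer. For a profile $\mathbf{c}$ with $T_n(\mathbf{c})$ true, the induced way of buying all books has total price $\sum_{j=1}^m c_j$ before discounts, and shop $s_j$ contributes its discount $d_{s_j}$ precisely when $c_j \ge t_{s_j}$, so the corresponding solution has cost $\sum_{j=1}^m c_j - \sum_{j \,:\, c_j \ge t_{s_j}} d_{s_j}$. The \pbabbrev instance is a yes-instance iff the minimum of this expression over all profiles with $T_n(\mathbf{c})$ true is at most $K$ (and the actual assignment is recovered by standard backtracking). For the running time: the table has $(n+1)(\mathcal{W}+1)^m$ entries, each evaluated in $O(m)$ time via the recurrence, and the final minimisation scans $(\mathcal{W}+1)^m$ profiles at $O(m)$ cost each; the total is $O(n m \mathcal{W}^m)$ as claimed.

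I expect the only real subtlety to be conceptual rather than technical: a shop's discount is a step function of the amount spent there, so discounts are not separable across books and cannot be charged incrementally during the dynamic program. Carrying the entire spending vector as the state, and postponing all discount accounting to the final read-off step, is exactly what sidesteps this obstacle; the remainder is a routine bounded-coordinate dynamic program, so I do not anticipate further difficulties.
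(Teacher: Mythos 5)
Your proposal is correct and matches the paper's proof essentially verbatim: the same Boolean table indexed by a book prefix and a per-shop spending vector, the same recurrence over the shops selling the current book, and the same final pass that applies the threshold/discount test to each completed profile. The running-time accounting is also identical.
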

\begin{proof}
 We propose the following dynamic programming algorithm, which generalises the classical pseudo-polynomial algorithm for {\sc Partition}.  
 Let $i\in [n]$ and $p_s\in[\mathcal W]$ for $s\in S$. Define  $T[i,p_{s_1}, \ldots, p_{s_m}]$ as 1 if it is possible to buy books $1$ to $i$ by paying exactly $p_s$ (discount excluded) in shop $s$; and 0 otherwise. 
 For $i=0$, $T[0,p_{s_1}, \ldots, p_{s_m}]=1$ if and only if $p_s=0$ for all $s\in S$. The following formula allows to fill the table recursively for $i\geq 1$:
 \[T[i,p_{s_1}, \ldots, p_{s_m}]=\max_{e\in E, i\in e} T[i-1, p_{s_1}', \ldots, p_{s_m}'] \text{ where } p_s'= 
 \left\{\begin{array}{ll}
		 p_s-w(e)  &\text{ if } s\in e, \\
		p_s  &\text{ otherwise.}
		\end{array}
		\right. \]

It remains to check whether there exists a valid solution within the table. To this end, we need to take the discounts into account. Clearly, an entry $T[n,p_{s_1}, \ldots, p_{s_m}]=1$ leads to a solution if the following holds:
$$\sum_{s\in S}p_s - \sum_{s\in S, p_s\geq t_s}d_s\leq K.$$

The running time corresponds exactly to the time needed to fill the table:  any of the $n\mathcal W^m$ cells requires at most $m$ look-ups, which yields the claimed running time.
\qed
\end{proof}

\begin{proposition}	\label{prop:fpt-fewshops}	
\pblong  admits an \FPT algorithm for parameter $m$ when all prices are equal.
\end{proposition}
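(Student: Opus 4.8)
The plan is to exploit the fact that when all book prices equal a common value $p$, the assignment of books to a given shop $s$ influences the objective only through its \emph{cardinality}: since every purchased book contributes $p$, the threshold condition $\sum_{(b,s)\in E'}w(e)\ge t_s$ is equivalent to $|\{b:(b,s)\in E'\}|\ge \tau_s$, where $\tau_s:=\lceil t_s/p\rceil$ is an integer. Moreover the total undiscounted price is always exactly $np$ (with $n=|B|$), so minimising the cost is the same as maximising the total discount received, and $\mathcal I$ is a yes-instance if and only if the maximum total discount achievable by a valid assignment is at least $np-K$. (If some book has no incident edge the instance is trivially a no-instance, so assume otherwise; the degenerate case $p=0$ is handled directly.)

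Next I would reduce this optimisation to a search over subsets of shops. Call $S'\subseteq S$ \emph{realisable} if there is an assignment $f\colon B\to S$ with $\{b,f(b)\}\in E$ for every $b$, and $|f^{-1}(s)|\ge \tau_s$ for every $s\in S'$. The key observation is that
the maximum total discount equals $\max\{\sum_{s\in S'}d_s : S'\ \text{realisable}\}$: on one hand, the set of shops that reach their threshold under an optimal assignment is realisable and has weight equal to the optimal discount; on the other hand, since all $d_s$ are non-negative, a witness assignment for any realisable $S'$ yields discount at least $\sum_{s\in S'}d_s$, because the shops reaching their threshold under that assignment form a superset of $S'$. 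Hence it suffices to enumerate the $2^m$ subsets $S'\subseteq S$, test realisability of each, and return the largest value $\sum_{s\in S'}d_s$ over the realisable ones.

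It then remains to show that realisability of a fixed $S'$ is decidable in polynomial time. I would build a flow network on top of $G$: a source joined to each book by a unit-capacity edge with lower bound $1$ (forcing every book to be bought), the edges of $G$ oriented from books to shops with capacity $1$, and each shop joined to a sink, where the sink-edge of every $s\in S'$ has lower bound $\tau_s$ and the others lower bound $0$. Then $S'$ is realisable exactly when this network with lower bounds admits a feasible flow, which is testable in polynomial time by the standard reduction to ordinary maximum flow; an integral feasible flow directly encodes a suitable assignment $f$. (Equivalently one can use the defect form of Hall's theorem: once every book is known to have an incident edge, $S'$ is realisable iff $\bigl|\bigcup_{s\in S''}N_G(s)\bigr|\ge\sum_{s\in S''}\tau_s$ for all $S''\subseteq S'$.) Iterating over all $2^m$ subsets, each with one polynomial-time feasibility check, yields a running time of $2^m\cdot\mathrm{poly}(n,m)$, hence an \FPT\ algorithm for parameter $m$.

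The step I expect to require the most care is the correctness of the realisability machinery: namely, that maximising $\sum_{s\in S'}d_s$ over realisable subsets genuinely recovers the optimum (rather than undercounting discounts incidentally triggered at shops outside $S'$), and that books not claimed by the shops of $S'$ can always be absorbed elsewhere — which the flow formulation takes care of automatically, since it demands that all $n$ books be routed.
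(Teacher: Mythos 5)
Your proof is correct and follows essentially the same strategy as the paper's: normalise the thresholds to cardinality constraints, enumerate the $2^m$ candidate subsets $S'$ of discount shops, and decide in polynomial time whether every book can be assigned to an incident shop while each $s\in S'$ receives at least its threshold number of books. The only difference is the polynomial subroutine --- you test realisability via a feasible flow with lower bounds (or the defect Hall condition), whereas the paper computes a maximum $f_{S'}$-star subgraph with a degree-constrained-subgraph algorithm --- and these are interchangeable here.
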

\begin{proof}
We assume without loss of generality that all prices are equal to 1. Let $S'\subseteq S$. We write $f_{S'}:B\cup S\rightarrow \mathbb N$ for the following function:
\begin{align*}
f_{S'}(b)&=1\text{ for }b\in B,\\
f_{S'}(s)&=t_s\text{ for }s\in S',\\
f_{S'}(s)&=0\text{ for }s\notin S'.
\end{align*}
We write $d_{S'}=\sum_{s\in S'} d_s$ and $t_{S'}=\sum_{s\in S'} t_s$. An \emph{$f$-star subgraph} of $G=(B\cup S, E)$ is a subgraph $G'$ such that the degree of each vertex $u\in B \cup S$ is at most $f(u)$ in $G'$, and every connected component of $G'$ is isomorphic to $K_{1,p}$ for some integer $p$.

Let $\mathcal{I}=(B\cup S, E, w, \discounts, K)$ be an instance  of \pblong  with $w(e)=1$ for all $e\in E$. We show that $\mathcal{I}$ is a yes-instance if and only if there exists $S'\subseteq S$ with $|B| - d_{S'} \leq K$ such that $(B \cup S, E)$ admits an $f_{S'}$-star subgraph with $t_{S'}$ edges. An \FPT\ algorithm follows easily from this characterisation: enumerate all subsets $S'$ of $S$ in time $2^{|S|}$, and for each subset,  compute a maximum $f_{S'}$-star subgraph in time $O(|E| \log |B\cup S|)$~\cite{DBLP:journals/ipl/Gabow76a}.

        \noindent\fbox{$\Rightarrow$}
Let $E'\subseteq E$ be a solution and $S'$ be the set of shops whose threshold $t_s$ is reached. 
Since the total price  is $|B|-d_{S'}$, we have $|B| - d_{S'} \leq K$.
Since every weight  equals 1, all vertices of $S'$ have degree at most $t_s$ in $E'$. Let $E''\subseteq E'$ be a subset obtained by keeping exactly $t_s$ edges incident to each $s\in S'$ and no edge incident to $s\notin S$. Then $E''$ is an $f_{S'}$-star subgraph of size $t_{S'}$.

        \noindent\fbox{$\Leftarrow$}
Let $G'=(B\cup S, E')$ be an $f_{S'}$-star factor of $G$ of size $t_{S'}$ with $S'\subseteq S$, and $|B| - d_{S'} \leq K$. The degree and size constraints force all vertices 
in $S'$ 
to have degree exactly $t_s$ in $G'$. We build a solution as follows: for each book $b\in B$, if $E'$ contains an edge $(b,s)$ incident to $b$, then buy $b$ from shop $s$, otherwise buy $b$ from any other shop. Overall, at least $t_s$ books are purchased from a shop $s\in S'$, so the total price is at most $|B|-d_{S'}$.
\qed
\end{proof}
\section{Approximations}\label{sec:approx}
Since variants of
{\sc Clever Shopper} are, by and large, hard to solve exactly, it is natural to look for approximation algorithms. However, 
our hardness proofs can be modified to imply 
the \NP-hardness of deciding whether the total price (including discounts) is 0 or more. For instance, in \Cref{prop:problem-is-hard-for-two-shops}, we can set the discounts to $T/2$ instead of $1$, so the {\sc Partition} instance reduces to checking whether the optimal solution has cost  0. Therefore, we start with the following bad news:

\begin{corollary}
{\sc Clever Shopper} admits no approximation  unless \P\ = \NP. 
\end{corollary}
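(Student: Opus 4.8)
The plan is to show that any approximation algorithm for \pblong would let us decide in polynomial time whether the optimal cost of an instance equals~$0$, a question we will make \NP-hard by a tiny modification of the reduction behind \Cref{prop:problem-is-hard-for-two-shops}. Recall that a (multiplicative) $\rho$-approximation algorithm for a minimisation problem outputs, on every instance, a feasible solution of value at most $\rho\cdot\OPT$, for some ratio $\rho\ge 1$. The crucial observation is that whenever $\OPT=0$, such an algorithm is forced to return a solution of value $0$ as well, no matter how large $\rho$ is; hence an approximation algorithm with \emph{any} ratio separates instances with $\OPT=0$ from instances with $\OPT>0$.

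First I would revisit the reduction from {\sc Partition} of \Cref{prop:problem-is-hard-for-two-shops}, keeping the two shops $s_1,s_2$ and the books with prices $\omega(a)$, but now setting $(d_{s_1},t_{s_1})=(d_{s_2},t_{s_2})=(T/2,T/2)$, where $T=\sum_{a\in A}\omega(a)$, and dropping the decision bound $K$ (we argue about the optimal cost directly). For any feasible purchase, the total paid equals $T$ minus the sum of the discounts actually claimed; since there are only two shops, each offering discount $T/2$, this quantity is always non-negative, so $\OPT\ge 0$. I would then argue that both discounts are claimed if and only if the books bought at $s_1$ and the books bought at $s_2$ each have total price at least $T/2$, which — since these two totals add up to $T$ — forces each of them to equal exactly $T/2$, i.e., forces an exact partition of $A$. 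Consequently $\OPT=0$ exactly when $\mathcal I=(A,\omega)$ is a yes-instance of {\sc Partition}; and when $\mathcal I$ is a no-instance, at most one discount can ever be claimed, so every feasible solution costs at least $T-T/2=T/2>0$, and thus $\OPT\ge T/2>0$.

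Combining these two facts, I would conclude by contradiction: assume \pblong admits a polynomial-time $\rho$-approximation algorithm for some ratio $\rho$ (possibly a large function of the input size). Running it on the instance constructed above yields a feasible solution whose value is at most $\rho\cdot\OPT$; this value is $0$ when $\mathcal I$ is a yes-instance, and at least $T/2>0$ when $\mathcal I$ is a no-instance. Testing whether the returned value is $0$ therefore decides {\sc Partition} in polynomial time, whence $\P=\NP$.

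The only delicate part will be the bookkeeping in the middle step: checking that the objective is genuinely non-negative on all feasible solutions (so that ``$\OPT=0$'' is meaningful and the two cases are separated by a strict gap), and that the $T/2$ lower bound in the no-instance case holds for \emph{every} distribution of the books between the two shops, including the degenerate options of buying everything from a single shop. Everything else follows immediately from the definition of the approximation ratio. (One could equally start from the strongly-\NP-hard reduction of \Cref{prop:problem-is-strongly-hard-for-k-shops} by scaling its discounts in the same way, which would additionally rule out approximation even when prices are encoded in unary.)
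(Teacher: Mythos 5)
Your proposal is correct and follows essentially the same route as the paper: the text preceding the corollary sketches exactly this modification of the {\sc Partition} reduction from \Cref{prop:problem-is-hard-for-two-shops}, raising both discounts to $T/2$ so that the optimum is $0$ precisely on yes-instances, whence no multiplicative approximation can exist unless $\P=\NP$. You simply spell out the details (non-negativity of the objective and the $T/2$ gap on no-instances) that the paper leaves implicit.
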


Since this result seems resilient to most natural restrictions on the input structure (bounded prices, bounded degree, etc.), our proposed angle is 
to maximise the total discount rather than minimise the total cost. However, maximising the total discount is only relevant when the base price of the books is the same in all solutions (otherwise the optimal solution might not be the one with maximum discount), i.e., each book $b$ has a fixed price $w_b$, and $w(\{b,s\})=w_b$ for every $\{b,s\}\in E$. We call this variant {\sc Max-Discount Clever Shopper}. This ``fixed price'' constraint is not strong (all reductions from \Cref{sec:hardness-results} satisfy it). In this setting, \Cref{prop:problem-is-hard-for-two-shops} shows that it is \NP-hard to decide whether the optimal discount is 1 or 2. This yields the following corollary:

\begin{corollary}
{\sc Max-Discount Clever Shopper} is \APX-hard: it does not admit a $(2-\epsilon)$-approximation unless \P\ = \NP.
\end{corollary}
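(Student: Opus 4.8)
The plan is to re-use the reduction from {\sc Partition} given in the proof of \Cref{prop:problem-is-hard-for-two-shops}, observing first that it already produces a ``fixed price'' instance: every book $a$ has a single price $\omega(a)$, the same in both shops, so the output is a legal instance of {\sc Max-Discount Clever Shopper}. In that construction there are exactly two shops, each with discount $1$ and threshold $T/2$ (where $T=\sum_{a\in A}\omega(a)$), so the total discount achievable by any solution lies in $\{0,1,2\}$. I would then establish the two facts about the optimum discount $\OPT$ of the produced instance: (i) if the {\sc Partition} instance is a yes-instance, buying the books according to a balanced partition makes both shops reach their threshold, so $\OPT=2$; (ii) if it is a no-instance, then $\OPT=1$, since buying every book from $s_1$ makes its purchases total $T\geq T/2$ and yields discount $1$, whereas obtaining discount $2$ would require each shop's purchases to be at least $T/2$, i.e.\ exactly $T/2$, which is precisely a balanced partition and hence impossible.

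Given a hypothetical polynomial-time $(2-\epsilon)$-approximation algorithm $\mathcal{A}$ for {\sc Max-Discount Clever Shopper}, I would run it on the instance built from an arbitrary {\sc Partition} instance and inspect the discount $d$ of the returned solution. On a yes-instance, $\mathcal{A}$ guarantees $d\geq \OPT/(2-\epsilon)=2/(2-\epsilon)>1$, and since $d$ is an integer in $\{0,1,2\}$ this forces $d=2$; on a no-instance, $d\leq \OPT=1$. Hence $d=2$ if and only if the {\sc Partition} instance is a yes-instance, so $\mathcal{A}$ would decide {\sc Partition} in polynomial time, contradicting $\P\neq\NP$. This proves the claimed $(2-\epsilon)$-inapproximability, and in particular \APX-hardness.

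I do not expect any real obstacle here; the only points that deserve care are the ``no-instance'' direction of fact (ii) — that one can never reach discount $2$ without a balanced partition, while discount $1$ is always attainable, both immediate from the threshold value $T/2$ and $\sum_{a\in A}\omega(a)=T$ — and the observation that it is the combination of the $\{1,2\}$ gap in the optimum with the integrality of the discounts that rules out every approximation ratio strictly below $2$.
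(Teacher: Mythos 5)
Your proof is correct and follows exactly the paper's intended argument: the corollary is stated as a direct consequence of the {\sc Partition} reduction of \Cref{prop:problem-is-hard-for-two-shops}, which (being a fixed-price instance) makes it \NP-hard to distinguish optimal discount $2$ from optimal discount $1$, and the integrality of the discount then rules out any $(2-\epsilon)$-approximation. The only detail worth keeping explicit, which you do handle, is that discount $1$ is always attainable (the two shops' purchases sum to $T$, so one of them reaches $T/2$), so the optimum on a no-instance is exactly $1$ rather than $0$.
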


Whether or not {\sc Max-Discount Clever Shopper} admits a fixed-ratio approximation remains open.

 \begin{proposition}\label{prop:apx-hardness}
   {\sc Max-Discount Clever Shopper} is \APX-hard even when each shop sells at most $3$ books, 
   and each book is available in at most 2 shops.
 \end{proposition}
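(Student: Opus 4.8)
The plan is to establish \APX-hardness of {\sc Max-Discount Clever Shopper} under the stated degree restrictions by an L-reduction (or a ratio-preserving reduction) from a bounded-degree variant of a known \APX-hard problem. A natural source is {\sc Maximum 3-Bounded 3-Dimensional Matching} or, more likely here, {\sc Maximum Independent Set} / {\sc Max 3-SAT} with bounded occurrences; given the combinatorial shape of the problem (each shop touches few books, each book lies in few shops, and we want to maximise the number/weight of discounts triggered), the cleanest choice is probably {\sc Max 3-DM} or a bounded-degree {\sc Set Packing}: there we must select a maximum collection of pairwise-disjoint triples, which mirrors selecting a maximum-weight set of shops whose thresholds are simultaneously met by disjoint book purchases.

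First I would fix the target problem and recall its \APX-hardness with an explicit constant gap (e.g. {\sc Max 3-DM} with each element in at most a bounded number of triples is \APX-hard). Next I would describe the gadget: each triple becomes a shop selling exactly the three books corresponding to its elements, with threshold $t_s$ equal to the sum of those three book prices (so the discount fires only if all three are bought there) and discount $d_s$ a fixed positive constant, say $1$. Each element of the ground set becomes a book with some fixed price $w_b$; to respect ``each book available in at most $2$ shops'' I would either start from a 3-DM variant with element frequency $\le 2$, or insert parallel ``copy'' shops/dummy books so that any high-frequency element is split. I would then argue the correspondence: a family of pairwise-disjoint triples of size $\ell$ yields a purchasing assignment obtaining discount exactly $\ell$ (buy each book of a selected triple from its shop, everything else arbitrarily), and conversely any assignment obtaining total discount $\ell$ can be decoded, by looking at which shops reach their threshold, into $\ell$ pairwise-disjoint triples, since a book counted toward a shop's threshold is bought there and nowhere else.

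I would then check the L-reduction inequalities: $\OPT$ of the {\sc Max-Discount} instance equals $\OPT$ of the matching instance (up to an additive/multiplicative constant absorbed by padding), and a solution within additive $\beta$ of optimum maps back to a matching within $O(\beta)$ of optimum; combined with the known lower bound $\OPT_{3\text{-DM}} = \Omega(\text{instance size})$ this gives the $\alpha,\beta$ constants needed. Finally I would note that the construction obviously satisfies ``each shop sells at most $3$ books'' by design, and ``each book in at most $2$ shops'' either from the chosen base problem or from the splitting gadget, and that all prices can be kept polynomial (indeed we may take all book prices equal, reusing the fixed-price observation), so the hardness is in the strong sense.

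The main obstacle I expect is reconciling the two degree bounds simultaneously: forcing \emph{both} bounded shop size \emph{and} bounded book frequency while keeping the reduction gap-preserving. If the off-the-shelf {\sc Max 3-DM} variant does not already have element frequency $\le 2$, the frequency-reduction gadget (splitting a book shared by many shops into a chain/cycle of equivalent copies linked by auxiliary shops) must be designed so it does not create spurious discounts and changes the optimum only by a controlled constant factor; verifying that this gadget is gap-preserving — i.e. that ``cheating'' on the copy gadget can always be locally repaired without losing discount — is the delicate part of the argument.
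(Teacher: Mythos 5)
Your proposal is a plan rather than a proof, and the one step you yourself flag as ``the delicate part'' is exactly where it is incomplete. The reduction skeleton (each triple becomes a shop selling its three books, threshold equal to the sum of their prices so the discount fires only when all three are bought there, discounts of disjoint triples add up) is sound and would give a clean L-reduction with $\alpha_1=\alpha_2=1$ \emph{provided} the source problem is \APX-hard with every element in at most two triples. But you never discharge that proviso: you neither cite an \APX-hardness result for {\sc Max 3-DM} / 3-{\sc Set Packing} with element frequency $2$ (such a result does exist, e.g.\ Chleb\'{\i}k and Chleb\'{\i}kov\'a's bound for {\sc Max-3DM-2}, but it must be invoked explicitly), nor do you construct and verify the frequency-reduction gadget you propose as a fallback. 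Since the whole point of the proposition is the simultaneous bounds ``$\le 3$ books per shop'' and ``$\le 2$ shops per book,'' leaving the frequency bound conditional is a genuine gap, not a routine detail: a naive splitting of a high-frequency book into copies tied together by auxiliary shops changes both the book set and the achievable discount, and one must prove that cheating on the copies can be locally repaired without loss --- precisely the argument you defer.

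For comparison, the paper avoids this issue entirely by reducing from {\sc Max 3-Sat} in which each literal occurs exactly twice. Each literal occurrence is a book sold by exactly two shops (its clause shop $C_i$, which gives discount $1$ at threshold $1$, and the shop $t_i$ or $f_i$ of its variable, which gives discount $2$ at threshold $3$), and each variable contributes one extra book $x_i$ sold by both $t_i$ and $f_i$; this shared book is what forces a consistent truth assignment, since at most one of $t_i,f_i$ can reach its threshold. The degree bounds then hold by construction, and the L-reduction constants ($\alpha_1=4$, $\alpha_2=1$) follow from $4n=3m$ and $2\OPT(\varphi)\ge m$. If you want to salvage your route, commit to {\sc Max-3DM-2} as the source and cite its \APX-hardness; otherwise the truth-assignment gadget above is the standard way to get both degree bounds for free.
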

 
 \begin{proof}
 We reduce from {\sc Max $3$-Sat} (the problem of satisfying the maximum number of clauses in a {\sc 3-sat} instance), known to be \APX-hard 
 when each literal occurs 
   exactly twice~\cite{DBLP:journals/eccc/ECCC-TR03-049}.  
   Let $\varphi = C_1 \wedge C_2 \wedge \cdots \wedge C_m$ be such a $3$-CNF formula over a set $X = \{x_1, x_2, \ldots, x_n\}$ of boolean variables.
   For every $1 \leq i \leq m$ and $1 \leq j \leq 3$,  let $\ell_{i,j}$ be the $j$-th literal of clause $C_i$.
   We obtain an instance $\mathcal{I}$ of {\sc Max-Discount Clever Shopper} by first building a bipartite graph $G = (B\cup S, E)$ as follows
   (for ease of presentation, $C_i$, $x_i$ and $\ell_{i,j}$ will be used both to denote respectively clauses, variables and literals in 
   $3$-CNF formula context, and the corresponding vertices in $G$):
   \begin{align*}
      B &= \{\ell_{i,j} : 1 \leq i \leq m \text{ and } 1 \leq j \leq 3\} \;\cup\; \{x_i : 1 \leq i \leq n\} \\
      S &= \{C_i : 1 \leq i \leq m\} \;\cup\; \{t_i, f_i : 1 \leq i \leq n\} \\
      E &= E_1 \;\cup\; E_{2,p} \;\cup\; E_{2,n} \;\cup\; E_3 \\
      \intertext{where}
      E_1 &= \{\{\ell_{i,j}, C_i\} : 1 \leq i \leq m \text{ and } 1 \leq j \leq 3\} \\
      E_{2,p} &= \{\{\ell_{i,j}, t_i\} : 1 \leq i \leq m \text{ and $\ell_{i,j}$ is the positive literal $x_i$}\} \\
      E_{2,n} &= \{\{\ell_{i,j}, f_i\} : 1 \leq i \leq m \text{ and $\ell_{i,j}$ is the negative literal $\overline{x_i}$}\} \\
      E_3 &= \{\{x_i, t_i\}, \{x_i, f_i\} : 1 \leq i \leq n\}\text{.}
   \end{align*}
   Observe that each shop sells exactly $3$ books and that each book is sold in exactly $2$ shops.
   We now turn to defining the prices, the thresholds and the discounts.
   All shops sell books at a unit price.
   For the shops $C_i$, $1 \leq i \leq m$, a purchase of value $1$ yields a discount of $1$.
   For the shops $t_i$ and $f_i$, $1\leq i \leq n$, a purchase of value $3$ yields a discount of $2$.
   This discount policy implies that, for every $1 \leq i \leq n$, a customer cannot obtain 
   a $2$ discount both in shop $t_i$ and in shop $f_i$ (this follows from the fact that the book $x_i$ is sold  by both shops 
   $t_i$ and $f_i$).

  First, it is easy to see that 
  the largest  discount that can be obtained is $2n + m$  
  (the upper bound is achieved by obtaining a discount in every shop $C_i$ for $1 \leq i \leq m$, and in either the shop $t_i$ 
  or the shop $f_i$ for $1 \leq i \leq n$).
  On the other side, for any truth assignment $\tau$ for $\varphi$ satisfying $k$ clauses, a $2n + k$ discount can be obtained as follows.
  \begin{itemize}
      \item
      For any variable $x_i$, $1 \leq i \leq n$,
      if $\tau(x_i) = \texttt{false}$, then buy $3$ books from shop $t_i$, and
      if $\tau(x_i) = \texttt{true}$ then  buy $3$ books from shop $f_i$. Intuitively, if a variable is true, then all negative literals are ``removed'' by $f_i$, and all positive literals remain available for the corresponding clauses.
      \item
      For any clause $C_i = \ell_{i,1} \vee \ell_{i,2} \vee \ell_{i,3}$ satisfied by the truth
      assignment $\tau$, buy book $\ell_{i,j}$ from shop $C_i$, where $\ell_{i,j}$ is a
      literal satisfying the clause $C_i$.
  \end{itemize}
  Then it follows that 
  \begin{align*}
      \OPT(I) 
      = 2n + \OPT(\varphi)  &= 3m/2 + \OPT(\varphi)      &\qquad& \text{(since $4n=3m$)}\\
      &\leq 3 \OPT(\varphi) + \OPT(\varphi) &\qquad& \text{(since $2 \OPT(\varphi) \geq m$)} \\
      &\leq 4 \OPT(\varphi)\text{.} &&
  \end{align*}   

  Suppose now that we buy all books in $B$ for a total  discount of $k'$.
  First, we may clearly assume that $k' \geq 2n$ since a total $2n$ discount can always be achieved by buying  $3$ books  either from shop $t_i$ or from shop $f_i$, for every $1 \leq i \leq n$.
  Second, we may also assume that, for every $1 \leq i \leq n$, we buy either exactly $3$ books from shop 
  $t_i$ or exactly $3$ books  from shop  $f_i$.
  Indeed, if there exists an index $1 \leq i \leq n$ for which this is false, 
  then buying either exactly $3$ books from shop 
  $t_i$ or exactly $3$ books from shop $f_i$ instead results in a total $k''$ discount with $k'' \geq k'$
  (this follows from the fact that we can get a $2$ discount from $t_i$ or $f_i$ but only a $1$ discount from
  any shop $C_j$, $1 \leq j \leq m$). 
  We now obtain a truth assignment $\tau$ for $\varphi$ as follows:
  for any variable $x_i$, $1 \leq i \leq n$,
  set $\tau(x_i) = \texttt{false}$ if we buy $3$ books from shop $t_i$, and
  set $\tau(x_i) = \texttt{true}$ if we buy $3$ books from shop $f_i$ 
  (the truth assignment $\tau$ is well-defined since, for $1 \leq i \leq n$, we  cannot simultaneously buy $3$ books
  from shop $t_i$ and $3$ books from shop $f_i$ because of book $x_i$). 
  Therefore, a clause $C_i$ is satisfied by $\tau$ if and only if the corresponding shop $C_i$ contains at least one book $l_{i,j}$ which is not bought from some other shop $t_i$ or $f_i$. 
  If we let $k$ stand for the number of clauses satisfied by $\tau$, then we obtain $k \geq k' - 2n$.
  It then follows that
\[  
      \OPT(\varphi) - k 
      = \OPT(I) - 2n - k 
      \leq \OPT(I) - 2n - k' + 2n 
      = \OPT(I) - k'\text{.} 
\]
Therefore, our reduction  is an \L-reduction 
  (\emph{i.e.}, $\OPT(\mathcal{I}) \leq \alpha_1 \OPT(\varphi)$ and
  $\OPT(\varphi) - k \leq \alpha_2 \left(\OPT(\mathcal{I}) - k'\right)$)
  with $\alpha_1 = 4$ and $\alpha_2 = 1$.
  \qed
\end{proof}

 \begin{proposition}
   {\sc Max-Discount Clever Shopper} where each shop sells at most $k$ books admits a $k$-approximation.
 \end{proposition}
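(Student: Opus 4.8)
The plan is to analyse a natural greedy algorithm by a charging argument, mimicking the classical $k$-approximation analysis of greedy for weighted $k$-set packing, and then to close an apparent off-by-one gap using the degree bound $k$.

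\emph{The algorithm I would run.} Keep a set $F$ of still-available books (initially $F=B$) and a set $S'$ of chosen shops (initially empty); write $w(B')=\sum_{b\in B'}w_b$ for a set of books $B'$. While some shop $s\notin S'$ satisfies $w(N(s)\cap F)\ge t_s$, pick such a shop with maximum discount $d_s$, pick any witness $B_s\subseteq N(s)\cap F$ with $w(B_s)\ge t_s$ (obtained by adding free neighbours of $s$ one at a time until the threshold is met), add $s$ to $S'$ and delete $B_s$ from $F$. Finally buy each book of $B_s$ from $s$ for every $s\in S'$, and buy every remaining book from any shop that sells it (such a shop exists, otherwise the instance has no feasible solution at all). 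Each $B_s$ has $|B_s|\le|N(s)|\le k$, the sets $B_s$ are pairwise disjoint, and the loop iterates at most $|S|$ times, so the algorithm is polynomial; and since prices are non‑negative, loading extra books onto a shop of $S'$ keeps it above its threshold, so the output is feasible and its total discount is at least $\mathrm{ALG}:=\sum_{s\in S'}d_s$. Fixing an optimal solution, I would let $S^\star$ be its set of discount shops and, for $s^\star\in S^\star$, let $B^\star_{s^\star}$ be a \emph{minimal} subset of the books it buys at $s^\star$ whose weight still reaches $t_{s^\star}$; after the harmless preprocessing that deletes every shop with $t_s=0$ (whose discount is collected in every solution), these sets are pairwise disjoint, non-empty, $|B^\star_{s^\star}|\le k$, and $\mathrm{OPT}=\sum_{s^\star\in S^\star}d_{s^\star}$.

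\emph{The charging map.} For $s^\star\in S^\star$, set $\sigma(s^\star)=s^\star$ if $s^\star\in S'$; otherwise, the algorithm stopped with $s^\star$ not selectable, so $w(N(s^\star)\cap F)<t_{s^\star}\le w(B^\star_{s^\star})$ and some book of $B^\star_{s^\star}$ was deleted during the run, and I let $\sigma(s^\star)$ be the \emph{first} chosen shop $g\in S'$ whose witness $B_g$ meets $B^\star_{s^\star}$. Then $\sigma$ is a well-defined function $S^\star\to S'$, so its fibres partition $S^\star$. I would check two facts. First, $d_{\sigma(s^\star)}\ge d_{s^\star}$: clear when $s^\star\in S'$, and otherwise, at the iteration selecting $g=\sigma(s^\star)$ the whole set $B^\star_{s^\star}$ was still free (as $g$ is the first shop touching it) and $s^\star$ was still unchosen, so $s^\star$ was among the candidates at that iteration and greedy picked $g$ over it. Second, $|\sigma^{-1}(g)|\le k$ for every $g\in S'$: its fibre consists of the shops $s^\star\in S^\star\setminus S'$ with $B^\star_{s^\star}\cap B_g\ne\emptyset$, together with $g$ itself when $g\in S^\star$; since the sets $B^\star_{\cdot}$ are pairwise disjoint, the shops of the first kind consume distinct books of $B_g$, so there are at most $|B_g|\le k$ of them — which already settles the case $g\notin S^\star$.

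\emph{The main obstacle.} The delicate case is $g\in S^\star\cap S'$, where the count above naively gives $|\sigma^{-1}(g)|\le|B_g|+1$, i.e.\ possibly $k+1$; I would close this gap by inspecting $B_g\cap B^\star_g$. If it is non-empty, then every first-kind preimage $s^\star$ has $B^\star_{s^\star}\cap B^\star_g=\emptyset$, so the book of $B_g$ it consumes lies in $B_g\setminus B^\star_g$; hence there are at most $|B_g|-1$ of them and $|\sigma^{-1}(g)|\le|B_g|\le k$. If $B_g\cap B^\star_g=\emptyset$, then $B_g$ and $B^\star_g$ are disjoint non-empty subsets of $N(g)$, so $|B_g|\le|N(g)|-|B^\star_g|\le k-1$ and again $|\sigma^{-1}(g)|\le|B_g|+1\le k$. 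Combining the two facts,
\[
\mathrm{OPT}=\sum_{g\in S'}\ \sum_{s^\star\in\sigma^{-1}(g)}d_{s^\star}\ \le\ \sum_{g\in S'}|\sigma^{-1}(g)|\,d_g\ \le\ k\sum_{g\in S'}d_g=k\cdot\mathrm{ALG},
\]
which yields the claimed $k$-approximation.
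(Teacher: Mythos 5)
Your proof is correct, and your algorithm is essentially the paper's: process shops greedily in order of decreasing discount, selecting a shop whenever its still-available books reach its threshold, and buying leftovers arbitrarily. The differences are that the paper assigns \emph{all} available books of a selected shop to it rather than a minimal witness, and---more substantively---that the accounting is organized differently. The paper defines a per-book charge $\delta(b)=d_s$ for every book assigned by the algorithm to a selected shop $s$, gets $\sum_{b}\delta(b)\le k\sum_{s\in S'}d_s$ for free because each selected shop carries at most $k$ books, and then exhibits, for each shop $s^\star$ of the optimum, a \emph{single} book in its optimal basket with $\delta\ge d_{s^\star}$ (either the book stayed with $s^\star$, or it was grabbed earlier by a shop with a larger discount); disjointness of the optimal baskets then yields the ratio with no case analysis. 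Your shop-to-shop map $\sigma$ is the dual bookkeeping: it forces you to bound fibre sizes by $k$, which is exactly where the $+1$ obstacle appears when a selected shop is also an optimal shop, and your resolution (splitting on whether $B_g\cap B^\star_g$ is empty, using $|B_g|+|B^\star_g|\le|N(g)|\le k$ in the disjoint case) is valid but is precisely the complication the paper's book-level charge sidesteps. One nit: literally \emph{deleting} shops with $t_s=0$ could orphan a book sold only there; the cleaner statement is that such shops contribute their discount to every solution, including the algorithm's, so they can be set aside in the charging without affecting the ratio.
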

\begin{proof}
Let $B_s$ be the set of books sold by shop $s$. 
Our approximation algorithm proceeds as follows: start with a set of selected shops $S'=\emptyset$,  a set of available books $B'=B$ and sort the shops by decreasing value of $d_s$. Then for each shop $s$, let $B_s'=B_s\cap B'$. If the books in $B_s'$ are enough to get the discount ($\sum_{b\in B'_s}\geq t_s$), then assign all books of $B_s'$ to shop $s$, add $s$ to $S'$ and set $B'=B'\setminus B'_s$. Finally, assign the remaining books to arbitrary shops that sell them.
 
We now prove the approximation ratio. For any $b\in B$, if $b\in B'_s$ for some $s\in S'$ then let $\delta(b) = d_s$, and $\delta(b)=0$ otherwise. Thus, for any shop $s\in S'$, $d_s=\frac{1}{|B'_s|}\sum_{b\in B'_s}\delta(b) \geq \frac{1}{k}\sum_{b\in B'_s}\delta(b)$ due to the degree-$k$ constraint. Note that for each shop of $S'$, the amount spent at  $s$ is at least $t_s$, so the total discount obtained with this algorithm is $D\geq \sum_{s\in S'} d_s\geq \frac 1k \sum_{b\in B} \delta(b)$ 

We now compare the result of the algorithm with any optimal solution. For such a solution, let $D^*$ be its total discount, $S^*$ be the set of shops where purchases reach the threshold, and, for any $s\in S^*$, let $B^*_s$ be the (non-empty) set of books purchased in shop $s$. Note that $D^*=\sum_{s\in S^*} d_s$. 
 
Consider a shop $s\in S^*$. We show that there exists a book  $b^*(s)\in B^*_s$ with $\delta(b^*(s))\geq d_s$. 
If 
$s\in S^*\cap S'$, then we take $b^*(s)$ to be any book in $B^*_{s}$. Either $b^*(s)\in B'_s$, in which case  $\delta(b^*(s)) = d_s$, or $b^*(s)\notin B'_s$, in which case $b^*(s)$ was assigned by the algorithm to a shop with a larger discount, i.e., $\delta(b^*(s)) \geq d_s$. 
If $s\in S^*\setminus S'$, since $s\notin S'$, 
at least one book in $B^*_s$ 
is not available at the time the algorithm considers shop $s$; let $b^*(s)$ be such a book. Since it is not available, it has been selected as part of $B'_{s'}$ for some earlier shop $s'$ (i.e.,  $d_s\leq d_{s'}$). Therefore, $b^*(s)\in B^*_{s}\cap B'_{s'}$ and $\delta(b^*(s)) = d_{s'}\geq d_s$. 
Since the sets $B^*_s$ are pairwise disjoint for $s\in S^*$, we have $\sum_{s\in S^*} \delta(b^*(s)) \leq \sum_{b\in B} \delta(b)$. Putting it all together, we obtain:
\begin{equation*}
D^*= \sum_{s\in S^*} d_s 
\leq \sum_{s\in S^*} \delta(b^*(s)) 
\leq \sum_{b\in B} \delta(b)
\leq k D\text{.}
\end{equation*}
\qed
\end{proof}

\section{Conclusion}

We introduced 
the \pbabbrev problem, a variant of {\sc Internet Shopping} with free deliveries and shop-specific discounts 
based 
on shop-specific thresholds. We proved a number of hardness results, both in the classical complexity setting and from a parameterised complexity point of view. We also gave efficient algorithms for particular cases where restrictions apply to the number of books, the number of shops, or the nature of prices.

An interesting angle for future work is that of designing efficient exact algorithms for the general cases in which our \FPT\ algorithms are not sufficient.
Furthermore, it would be of interest to determine whether 
the \pbabbrev problem is \FPT\ for parameter 
\emph{maximum price $+$ number of shops}.

	\bibliographystyle{mynatstyle}
	\bibliography{main.bib}

\begin{thebibliography}{13}
\providecommand{\natexlab}[1]{#1}
\providecommand{\url}[1]{\texttt{#1}}
\expandafter\ifx\csname urlstyle\endcsname\relax
  \providecommand{\doi}[1]{doi: #1}\else
  \providecommand{\doi}{doi: \begingroup \urlstyle{rm}\Url}\fi

\bibitem[Assmann et~al.(1984)Assmann, Johnson, Kleitman, and
  Leung]{ASSMANN1984502}
{ S.~Assmann, D.~Johnson, D.~Kleitman, and J.-T. Leung}, {\em On a dual version
  of the one-dimensional bin packing problem}, J. Algorithms, 5 (1984),
  pp.~502--525.

\bibitem[Berman et~al.(2003)Berman, Karpinski, and
  Scott]{DBLP:journals/eccc/ECCC-TR03-049}
{ P.~Berman, M.~Karpinski, and A.~D. Scott}, {\em Approximation hardness of
  short symmetric instances of {MAX-3SAT}}, Electronic Colloquium on
  Computational Complexity {(ECCC)},  (2003).

\bibitem[Blazewicz et~al.(2010)Blazewicz, Kovalyov, Musial, Urbanski, and
  Wojciechowski]{DBLP:journals/amcs/BlazewiczKMUW10}
{ J.~Blazewicz, M.~Y. Kovalyov, J.~Musial, A.~P. Urbanski, and
  A.~Wojciechowski}, {\em Internet shopping optimization problem}, Applied
  Mathematics and Computer Science, 20 (2010), pp.~385--390.

\bibitem[Blazewicz et~al.(2014)Blazewicz, Bouvry, Kovalyov, and
  Musial]{DBLP:journals/4or/BlazewiczBKM14}
{ J.~Blazewicz, P.~Bouvry, M.~Y. Kovalyov, and J.~Musial}, {\em Internet
  shopping with price sensitive discounts}, 4OR, 12 (2014), pp.~35--48.

\bibitem[Blazewicz et~al.(2016)Blazewicz, Cheriere, Dutot, Musial, and
  Trystram]{Blazewicz2016}
{ J.~Blazewicz, N.~Cheriere, P.-F. Dutot, J.~Musial, and D.~Trystram}, {\em
  Novel dual discounting functions for the internet shopping optimization
  problem: new algorithms}, J. Sched., 19 (2016), pp.~245--255.

\bibitem[Bodlaender et~al.(2014)Bodlaender, Jansen, and
  Kratsch]{DBLP:conf/stacs/BodlaenderJK11}
{ H.~L. Bodlaender, B.~M.~P. Jansen, and S.~Kratsch}, {\em Kernelization lower
  bounds by cross-composition}, {SIAM} J. Discrete Math., 28 (2014),
  pp.~277--305.

\bibitem[Cesati(2002)]{DBLP:journals/ipl/Cesati02}
{ M.~Cesati}, {\em Perfect code is {W}[1]-complete}, Inf. Process. Lett., 81
  (2002), pp.~163--168.

\bibitem[Edmonds(1965)]{Edmonds65}
{ J.~Edmonds}, {\em Paths, trees and flowers}, Canad. J. Math,  (1965),
  pp.~449--467.

\bibitem[Gabow(1976)]{DBLP:journals/ipl/Gabow76a}
{ H.~N. Gabow}, {\em A note on degree-constrained star subgraphs of bipartite
  graphs}, Inf. Process. Lett., 5 (1976), pp.~165--167.

\bibitem[Gonzalez(1985)]{DBLP:journals/tcs/Gonzalez85}
{ T.~F. Gonzalez}, {\em Clustering to minimize the maximum intercluster
  distance}, Theor. Comput. Sci., 38 (1985), pp.~293--306.

\bibitem[Jansen et~al.(2013)Jansen, Kratsch, Marx, and
  Schlotter]{DBLP:journals/jcss/JansenKMS13}
{ K.~Jansen, S.~Kratsch, D.~Marx, and I.~Schlotter}, {\em Bin packing with
  fixed number of bins revisited}, J. Comput. Syst. Sci., 79 (2013),
  pp.~39--49.

\bibitem[Karp(1972)]{DBLP:conf/coco/Karp72}
{ R.~M. Karp}, {\em Reducibility among combinatorial problems}, in Proceedings
  of a symposium on the Complexity of Computer Computations, R.~E. Miller and
  J.~W. Thatcher, eds., The {IBM} Research Symposia Series, Yorktown Heights,
  New York, Mar. 1972, Plenum Press, pp.~85--103.

\bibitem[van Bevern et~al.(2016)van Bevern, Komusiewicz, Niedermeier, Sorge,
  and Walsh]{VANBEVERN201619}
{ R.~van Bevern, C.~Komusiewicz, R.~Niedermeier, M.~Sorge, and T.~Walsh}, {\em
  H-index manipulation by merging articles: Models, theory, and experiments},
  Artificial Intelligence, 240 (2016), pp.~19 -- 35.

\end{thebibliography}
	\newpage
    \appendix
    \section{Proof of \Cref{prop:no-poly-kernel}}\label{app:proof-no-poly-kernel}

    The following classical \NP-complete problem~\cite{DBLP:conf/coco/Karp72}  will be useful in that regard. 
	
        \medskip

	\decisionproblem{\sc Exact Cover By $3$-Sets (X3C)}{ 
		a set $X = \{x_1, x_2, \ldots, x_{3m}\}$ of items, a collection $\mathscr{C}$ of $3$-sets of $X$.}{is there a subset $\mathscr{C}'$ of 
		$\mathscr{C}$ that covers
		each item of $X$ exactly once?}
	{\sc X3C} remains  
    	\NP-complete  
	 when each 
     $x_i$ appears in exactly $3$ sets of
	$\mathscr{C}$~\cite{DBLP:journals/tcs/Gonzalez85}.

    	\begin{proof}We build an {\sc or-composition} 
        using {\sc Exact Cover By 3-Sets}.
		Consider $t$ instances of {\sc X3C}
         over the same 
         number~$n$ of items. 
        They are represented as bipartite graphs $(S_h\cup [n], E_h)$ for $h\in[t]$, 
        where the 3-sets of $[n]$ are represented as degree-3 vertices $u\in S_h$.
		
		We first define some ``shop identifiers''. Write
		$J=\{0,1\}\times [\lceil\log t\rceil]$. For each integer $h\in [t]$, 
		let $\mathrm{Key}_h$ be the size-$\lceil\log t\rceil$ subset of $J$
		containing $(b, j)$ if the $j$th digit in the binary representation
		of $h$ is equal to $b$. Note that for $1\leq h< h'  \leq t$, we have
		$\mathrm{Key}_h\neq \mathrm{Key}_{h'}$. 
		We now build a new instance $(B\cup S, E, w,\discounts, K)$ as follows:
		\begin{itemize}
			\item Create shops $\sigma_{j}$ for all $j\in J$. Let
			$\Sigma= \{\sigma_{j}\mid j\in J\}$. The global set of shops is
			$S = \Sigma \cup \bigcup_{h\in[t]} S_h$. Note that $|S|=t+2\log t$.
			\item Create books $x^i_{j}$ for all $i\in [n]$ and $j\in J$. The
			global set of
			books is
			$B=\{x^i_j\mid i\in [n], j\in J\} \cup [n]$.
			Let $n'=|B|=n(2\log t +1)$.
			\item For each edge $e=\{s,i\}\in E_h$, where $h\in [t], s\in S_h, i\in[n]$,
			add edges $\{s,i\}$ and  $\{s, x^i_{ j}\}$ for all $j\in \mathrm{Key}_h$.
			Add also edges $\{\sigma_j,x^i_{j}\}$ for all $i\in[n]$ and $j\in J$.
			The overall set of edges is denoted $E$.
			\item Let all costs be equal to $T+1$ where $T$ is any non-negative
			integer (say, $T=42$).
			\item Let shop $s\in S$ have $t_s=k(T+1)$ and $d_s=k$ where $k$ is the degree of $s$. In other words, all shops give a discount of 1 per book only the buyer buys all books available from the shop.
			\item The budget is $K=Tn'$.
		\end{itemize}
		
		Note that due to the pricing and discount functions, the average cost
		of a book in a shop is between $T$ and $T+1$, and it reaches $T$ if and only
		if all books in this shop have been purchased. Since the shopper needs to
		buy $n'$ books with a budget of $Tn'$, she must either buy all books
		from the shops she visits, or none at all. Thus she is faced with the problem of
		finding a set of shops whose available books correspond exactly to
		the set of books she needs.
		
		The intuitive idea behind our construction is the following. We first
		build a set of shops behaving exactly like the union of all the sets in
		the instances of {\sc X3C}. This is achieved directly by selling
		the original books in the corresponding shops: the fact that each book
		must be taken exactly once directly gives an exact cover. The rest of the
		shops ($\Sigma$) and books ($x^i_j$) ensure that shops are used from a single
		set $S_h$. More precisely, half of the books $x^i_j$ are purchased together
		with book $i$, and they correspond to an identifier of the shops, while the
		other half must be purchased in shops from $\Sigma$, and enforce that the
		identifiers are the same for all books. That is, all books are purchased
		in shops from the same set $S_h$, which yields a solution to the $h$th instance
		of {\sc X3C}.
		
		We now formally prove that there is a solution to this instance of
		\pblong if and only if \emph{some} instance
		$(S_h\cup [n], E_h)$ of  {\sc Exact Cover By 3-Sets} for $h\in[t]$,
		is a yes-instance, which completes the {\sc or-composition}. 
		
        \noindent\fbox{$\Leftarrow$}\;  
        Let $h\in [t]$ be such that $(S_h\cup [n], E_h)$ is a
			yes-instance. Let $S_h'$ be the solution (that is, a subset of $S_h$
			such that all vertices in $[n]$ have exactly one neighbour in $S_h'$).
			Let $\Sigma'=\{\sigma_j\mid j\in J \setminus \mathrm{Key}_h\}$. We
			show that buying all books in all shops of $S' = S'_h\cup \Sigma'$
			gives a valid solution.
			Pick $i\in [n]$.  Since
			$S_h'$ is a solution to the {\sc X3C} instance, then there
			exists a single $s\in S_h'$ such that $(s,i)\in E$.
			Books $i$ and $x^i_{j}$ for
			$j\in \mathrm{Key}_h$ are thus sold by shop $s$, but not by any other shop in
			$S'$ (in particular, not by any shop in $\Sigma'$ since $j\notin \mathrm{Key}_h$).
			Consider now a book
			$x^i_{j}$ with $i\in [n]$ and $j\in J\setminus \mathrm{Key}_h$.
			Then $x^i_{j}$ is sold by shop $\sigma_j\in \Sigma'\subset S'$, and by no other
			shop in $S'$. Overall, each book is sold by exactly one shop in
			$S'$, so the shopper buys all books from those shops, for a base
			price of $n'(T+1)$ and with a discount of $n'$.
			
        \noindent\fbox{$\Rightarrow$}\;  
            As it has been remarked already, in any solution, the
			set of shops $S'\subseteq S$ must contain each book exactly once.
			Consider first book $1$: it is sold by a shop $s_1\in S'\cap S_{h_1}$
			for some $h_1\in [t]$ (since no shop in $\Sigma$ sells books in
			$[n]$). Consider now books $x^1_{j}$, with $j\in J$. If
			$j\in \mathrm{Key}_{h_1}$, then $x^1_{j}$ is sold by $s_1$,
			and thus $\sigma_j\notin S'$. If $j\in J\setminus\mathrm{Key}_{h_1}$,
			then $x^1_{j}$ is not sold by $s_1$, nor by any other shop  in
			$S'\setminus \Sigma$ (such a shop would also sell book $1$, which
			is already taken from shop $s_1$). Thus $x^1_{j}$ must be sold by some
			shop from $\Sigma$, which can only be $\sigma_j$ by construction. Hence
			$\{\sigma_j\mid J\setminus\mathrm{Key}_h\} \subseteq S'$. Consider now
			any index $h_2\neq {h_1}$. There exists some
			$j\in\mathrm{Key}_{h_2}\setminus\mathrm{Key}_{h_1}$. If there exists
			some shop $s\in S_{h_2}\cap S'$, then $s$ sells book $x^i_{ j}$ for
			some $i\in [n]$. However, this book is already taken at shop
			$\sigma_j\in S'$ (since $j\in J\setminus \mathrm{Key}_{h_1}$), hence
			there is no such shop $s$. Overall, $S'\setminus \Sigma \subseteq S_{h_1}$,
			that is, the shopper uses only shops from the same set $S_{h_1}$ as well
			as some shops from $\Sigma'$.
			We can now prove that $S'_{h_1}=S'\cap S_{h_1} = S'\setminus \Sigma$
			is a solution to the instance $(S_{h_1}\cup [n], E_{h_1})$ of
			{\sc Exact Cover By 3-Sets}. Indeed, consider any $i\in [n]$, then
			it is sold by a single shop in $s\in S'$, which cannot be in
			$\Sigma$, hence $s\in S'_{h_1}$. In other words, there is exactly
			one $s\in S'_{h_1}$ such that $(s, i)\in E_{h_1}$, so  $S'_{h_1}$
			is a valid cover of $[n]$.
    \qed        
	\end{proof}
	\newcommand\equals{=}
	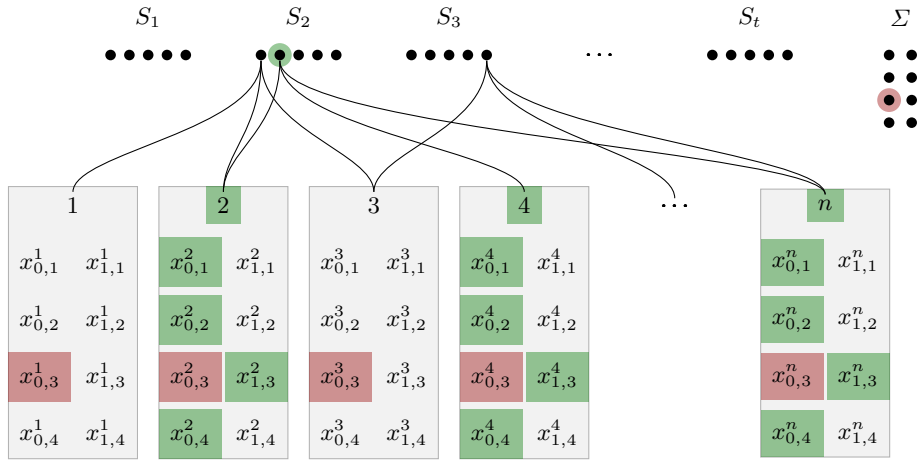
\begin{figure}\centering
		\begin{tikzpicture}[inner sep=2pt]
		\foreach \h in {1, ...,4} {
			\begin{scope}[shift={(\ifnum\h>3 2*\h+2\else 2*\h\fi+1,0)}]
			\def\label{\ifnum\h\equals4 $S_t$\else \ifnum\h\equals5 $\Sigma$\else $S_\h$ \fi\fi}
			\node at (0, 2.5) {\label};
			\foreach \s in {1, ..., 5} {
				\node (s\h\s) at (\s*0.25-0.75,2) {};
				\fill (s\h\s) circle [radius=2pt];
			}
			\end{scope}
		}
		\def\h{5}
		\begin{scope}[shift={(13,0)}]
		\def\label{$\Sigma$}
		\node at (0, 2.5) {\label};
		\foreach \s in {1, ..., 4} {
			\node (s\h0\s) at (-0.15,2.3-\s*0.3) {};
			\node (s\h1\s) at (0.15,2.3-\s*0.3) {};
			\fill (s\h0\s) circle [radius=2pt];
			\fill (s\h1\s) circle [radius=2pt];
		}
		\end{scope}

		\foreach \i in {1, ..., 5} {
			\begin{scope}[shift={(\ifnum\i\equals5 12\else 2*\i\fi,0)}]
			\def\label{\ifnum\i\equals5 n\else \i\fi}
			\node (i\i) at (0,0){$\label$};
			\begin{scope}[shift={(0,0)}]
			\foreach \j in {1, ..., 4} {
				\node (x\i0\j) at (-0.44,-\j*0.76){$x^{\label}_{0,\j}$};
				\node (x\i1\j) at (+0.44,-\j*0.76){$x^{\label}_{1,\j}$};
			}
			\end{scope}
			\end{scope}
			\node (ix) at (10,0) {$\ldots$};
			\node (sx) at (9,2) {$\ldots$};
		}
		
		\foreach \s/\i in {21/1,21/2,21/3, 22/2,22/4,22/5,35/3,35/x,35/5} {
			\draw (s\s) .. controls ($ (s\s) +(0,-1)$) and ($ (i\i) +(0,0.7)$) .. (i\i);
		}
		\definecolor{darkgreen}{RGB}{0,120,0}
		\definecolor{darkred}{RGB}{150,0,0}
		\begin{pgfonlayer}{bg}    
		\foreach \i in {1, ..., 5} {
			\node[draw=black!30,fill=black!5, fit=(i\i) (x\i04) (x\i14)] {};
		}
		\foreach \n in {i2,i4,i5,x201,x202,x213,x204,x401,x402,x413,x404,x501,x502,x513,x504} {
			\node[fill=darkgreen, opacity=0.4, fit=(\n)] {} ;
		}
		\fill[darkgreen, opacity=0.4]  (s22) circle[radius=4.5pt];
		\foreach \n in {x103,x203,x303,x403,x503} {
			\node[fill=darkred, opacity=0.4, fit=(\n)] {} ;
		}
		\fill[darkred, opacity=0.4]  (s503) circle[radius=4.5pt];
		\end{pgfonlayer}
		
		\end{tikzpicture}
		\caption{Illustration of the reduction from {\sc Exact Cover By 3-sets}.
			The instances are drawn in the top part, as bipartite graphs between $S_h$
			and books $[n]$ (for better readability, most edges are ommited, but in
			fact all vertices in sets $S_h$ have degree 3). The reduction introduces
			shops in $\Sigma$, and books $x^i_j$. Books sold in one of the shops
			from $S_2$ are highlighted in green. They correspond to the books $i$
			to which this element is connected in the corresponding instance of
			{\sc Exact Cover}, as well as books $x^i_j$ where $j$ visits the elements
			of $\mathrm {Key}_2$ (the ``identifier'' of $S_2$). Since $2$ is written
			0010 in binary, the key contains positions $(0,1),(0,2),(1,3),(0,4)$.
			Books sold by shop $\sigma_{0,3}\in\Sigma$ are highlighted in red. They are books
			$x^i_{0,3}$ for all $i$. Notice that those two shops have no books in
			common, since $(0,3)\notin \mathrm {Key}_2$.
		}
	\end{figure}

\end{document}